\newif\iflncs
\newtheorem{theorem}{Theorem}[section]
\newtheorem{lemma}[theorem]{Lemma}
\newtheorem{corollary}[theorem]{Corollary}
\newtheorem{proposition}[theorem]{Proposition}
\newtheorem{claim}[theorem]{Claim}
\newtheorem{notation}{Notation}
\newtheorem*{rep@theorem}{\rep@title}
\newcommand{\newreptheorem}[2]{%
\newenvironment{rep#1}[1]{%
 \def\rep@title{#2 \ref{##1}}%
 \begin{rep@theorem}}%
 {\end{rep@theorem}}}
\newcommand{\F}{\mathbb{F}}
\newcommand {\ignore} [1] {}
\def \eqdef {:=}
\def \ind {i^*}
\renewcommand{\dagger}{\intercal}
\DeclareMathOperator{\modu}{mod}
\renewcommand{\dim}{{\sf dim}}
\providecommand{\eqdef}{:=}
\newcommand{\etal}{{\em et al.\ }\xspace}
\title{Tight Cell Probe Bounds for Succinct Boolean Matrix-Vector Multiplication}
\author{Diptarka Chakraborty\thanks{Computer Science Institute of Charles University, Prague. \texttt{diptarka@iuuk.mff.cuni.cz}. Supported by the funding from the European Research Council under the European Union's Seventh Framework Programme (FP/2007-2013)/ERC Grant Agreement no. 616787.} \qquad Lior Kamma\thanks{MADALGO. Aarhus
    University. \texttt{lior.kamma@cs.au.dk}. Supported by a Villum Young Investigator Grant.} \qquad Kasper Green
  Larsen\thanks{MADALGO. Aarhus
    University. \texttt{larsen@cs.au.dk}. Supported by a Villum Young
    Investigator Grant and an AUFF Starting Grant.}}
\begin{document}

\date{}
\maketitle

\begin{abstract}
The conjectured hardness of Boolean matrix-vector multiplication has
been used with great success to prove conditional lower bounds for
numerous important data structure problems, see Henzinger et
al. [STOC'15]. In recent work, Larsen and Williams [SODA'17] attacked
the problem from the upper bound side and gave a surprising cell probe
data structure (that is, we only charge for memory accesses, while computation is free). Their cell probe data structure answers queries in
$\tilde{O}(n^{7/4})$ time and is succinct in the sense that it stores the
input matrix in read-only memory, plus an additional
$\tilde{O}(n^{7/4})$ bits on the side. In this paper, we essentially
settle the cell probe complexity of succinct Boolean matrix-vector
multiplication. We present a new cell probe data structure with query
time $\tilde{O}(n^{3/2})$ storing just $\tilde{O}(n^{3/2})$ bits on
the side. We then complement our data structure with a lower bound
showing that any data structure storing $r$ bits on the side, with $n
< r < n^2$ must have query time $t$ satisfying $t r =
\tilde{\Omega}(n^3)$. For $r \leq n$, any data structure must have $t
= \tilde{\Omega}(n^2)$. Since lower bounds in the cell probe model
also apply to classic word-RAM data structures, the lower bounds naturally
carry over. We also prove similar lower bounds for
matrix-vector multiplication over $\mathbb{F}_2$.
\end{abstract}
\newpage

\section{Introduction}
\label{sec:intro}
Matrix-vector multiplication is one of the most fundamental
algorithmic primitives. In the data structure variant of the problem,
we are given an $n \times n$ matrix $M$ as input. The goal is to preprocess
$M$ into a data structure, such that upon receiving any $n$-dimensional
query vector $v$, we can quickly compute $Mv$. Constructing a data
structure instead of computing $Mv$ directly may pay off as soon as
we have to answer multiple matrix-vector multiplication queries on the
same matrix $M$.

When defined over the Boolean semiring (with addition replaced by OR
and multiplication replaced by AND) the above problem is a special
case of the well-known \emph{Online Matrix-Vector (OMV)} problem:
Given a matrix $M \in \{0,1\}^{n \times n}$ and a stream of vectors
$v^{(1)},\cdots, v^{(n)} \in \{0,1\}^n$ the goal is to output the
value of $Mv^{(i)}$ before seeing $v^{(j)}$ for any $j \in
\{i+1,\cdots ,n\}$. Henzinger \emph{et al.}~\cite{HKNS15} conjectured
that the OMV problem cannot be solved by any randomized algorithm with
error probability at most $1/3$ within time $O(n^{3 -\epsilon})$ for
any constant
$\epsilon > 0$ (i.e. amortized $O(n^{2-\epsilon})$ per vector
$v^{(i)}$). This conjecture is known as the OMV conjecture and is one 
of the central conjectures in the ``Hardness
in P'' area, along with the Strong Exponential Time Hypothesis
(SETH see~\cite{ImpagliazzoPaturi}), the 3SUM conjecture (see e.g.~\cite{AnkaOvermars}) 
and the All Pairs Shortest Paths conjecture (APSP see e.g.~\cite{WAPSP}). 
The conjecture implies a whole range of near-tight conditional
lower bounds for classic (fully/partially) dynamic data structure
problems such as dynamic reachability. For the matrix-vector
multiplication problem over the Boolean semiring, the OMV conjecture in
particular implies that for any polynomial preprocessing time and
space, the query time must be $n^{2-o(1)}$~\cite{HKNS15}.

The current best upper bound for the OMV problem is due to Larsen and
Williams~\cite{LW17}, who gave a randomized (word-RAM) 
data structure with a total running time
of $n^3/2^{\Omega(\sqrt{\lg n})}$ over a sequence of $n$
queries, i.e. amortized $n^2/2^{\Omega(\sqrt{\lg n})}$ per
query. While this new upper bound is non-trivial, it does not violate
the OMV conjecture.

The holy grail is, of course, to replace the OMV conjecture by an
unconditional $n^{3-o(1)}$ lower bound. In contrast to SETH, 3SUM
and APSP, the OMV problem is a data structure problem, rather than an
algorithmic problem. Since we have been vastly more successful in
proving unconditional lower bounds for data structures than for
algorithms, it does not a priori seem completely hopeless to prove a tight
unconditional lower bound for OMV in the foreseeable future. Data
structure lower bounds are typically proved in the cell probe model of
Yao~\cite{Yao81a}. In this model, computation is free of cost, and the
complexity of a data structure is solely the amount of memory it uses and 
the number of memory accesses it performs on answering a query. In
particular, lower bounds proved in the cell probe model apply to
data structures developed in the standard word-RAM model,
regardless of which unit cost instructions are available. Quite
surprisingly, Larsen and Williams~\cite{LW17} showed that the performance
of their OMV data structure greatly improves if implemented in the
cell probe model (i.e. if computation is free and we only charge for
accessing memory). Their cell probe data structure for matrix-vector
multiplication over the Boolean semiring has a query
time of $O(n^{7/4}/\sqrt{w})$, where $w$ denotes the
word size (typically $w = \Theta(\lg n)$). Thus the OMV conjecture is
\emph{false} in the cell probe model! 

But what is then the true complexity of matrix-vector multiplication in the cell probe model?
Is it the best one can hope for, namely $O(n/w)$ time (the size of the
output measured in words) with $O(n^2)$ bits of space? Or does one
have to pay a polynomial factor in either space or time? While not
matching the OMV conjecture, a polynomial lower bound ($n^{1+\Omega(1)}$ query time with, 
say, polynomial space) would still be immensely valuable as it would
be the first polynomial lower bound for any data structure problem and
would be a huge leap forward in proving unconditional lower
bounds. Furthermore, it would still imply non-trivial polynomial lower
bounds for numerous important data structure problems via the
reductions already given in previous papers. 

Our main contribution is to prove (near-)tight polynomial cell probe lower
bounds for the class of \emph{succinct} matrix-vector multiplication data structures. 
Before formally presenting our
results, we survey the current barriers for proving cell probe lower
bounds as this will help understand the context of our results.

\paragraph{Cell Probe Lower Bound Barriers.}
Much effort has gone into developing techniques for proving data
structure lower bounds in the cell probe model. For static data
structures (like matrix-vector multiplication), the current strongest
techniques~\cite{Larsen:2012:focs} can prove lower bounds of $t = \Omega(\lg
m/\lg \alpha)$ where $t$ is the query time, $m$ is the number of distinct possible queries in the
problem and $\alpha$ is the space-overhead over linear. Thus the
strongest previous lower bounds peak at $t = \Omega(\lg m)$ with
linear space. For matrix-vector multiplication with an $n \times n$
matrix $M$ (over $\{0,1\}$), there are $2^n$ possible queries $v$, thus the strongest
possible lower bound current techniques would allow us to prove is
$t = \Omega(n)$. This is unfortunately not much more than the
trivial $t = \Omega(n/w)$ one would get for just writing the output.

For dynamic data structures, i.e. data structures where one receives
both updates to the input data and queries, the current strongest
techniques~\cite{Larsen12a, weinstein:lbs} give lower bounds of $t = \Omega(\lg m \lg n /
\lg^2(u w))$ where $u$ is the update time, $w$ the word-size and
$n$ the input size/number of updates performed. This is about a $\lg
n$ factor more than for static data structures, thus still leaves us
quite far from proving lower bounds close to the conditional
ones. If we restrict ourselves to lower bounds for decision problems,
i.e. problems where the answer to a query is just one bit, the
situation is worse, with the strongest lower bounds being of the form
$t = \Omega(\lg m \sqrt{\lg n}/\lg^2(uw))$~\cite{LWY17}.

\paragraph{Restricted Data Structures.}
For many data structure problems, the lower bounds one can prove with
previous techniques are in fact tight, see e.g.~\cite{FS89, PD06,
  Pat11, PT11,Larsen12a, weinstein:lbs}. However, as we can see from
matrix-vector multiplication, there are also problems where the
current techniques are quite far from proving what we believe should
be the right lower bound. This has resulted in researchers proving a number of exciting
lower bounds for special classes of data
structures. For instance, Clifford \etal ~\cite{CGL15} consider matrix-vector
multiplication over a finite field $\F_p$ of exponential size $p =
2^{\Theta(n)}$. This results in more queries to the problem ($p^n =
2^{\Theta(n^2)}$) and thereby enabled proving a lower bound of $t =
\Omega(n^2/\lg \alpha)$. Their lower bounds hold when the word size $w$ is big enough to
store an element of the field, i.e. $w = \Theta(n)$ bits. An
interesting interpretation of the lower bound is that, as long as we do not take advantage of the
size of the field, any data structure is bound to use $\Omega(n^2/\lg
\alpha)$ query time. Another line of work has focused on non-adaptive
dynamic data structures~\cite{brody, brodynew, Ramamoorthy2017NonAdaptiveDS}. These are data structures where the
memory locations read upon answering a query depend \emph{only} on
the query and \emph{not} on the contents of probed cells, i.e. the
query algorithm may not branch based on what it reads. 

\paragraph{Succinct Data Structures.}
The last class
of data structures we consider are \emph{succinct} data
structures. Succinct data structures use
space very close to the information theoretic minimum. More formally,
we say that a data structure has {\em redundancy} of $r$ bits if its
space usage is $\Pi + r$ bits where $\Pi$ is the information theoretic
minimum for solving the problem and $r = o(\Pi)$.
The space usage of succinct data structures is measured only in terms of its
\emph{redundancy}. Using succinct data structures may be
crucial in applications where memory is scarce. We typically distinguish two types of
succinct data structures, namely \emph{systematic} and
\emph{non-systematic} data structures. Systematic data structures are more restricted
than non-systematic ones, in the sense that they always store the input
in read-only memory and then build an $r$-bit data structure on the
side. Non-systematic data structures just use at most $\Pi+r$ bits
(and thus do not have to store the input in the format in which it is
given). Succinct data structures have been studied extensively for
decades with many fundamental and important upper and lower bounds,
see e.g.~\cite{Jacobson:1988:SSD:915547, gal:succinct, patrascu08succinct, patrascu10ranklb}. The current strongest technique typically allows
one to prove lower bounds of the form $t r= \Omega(\Pi)$, see
e.g.~\cite{gal:succinct, BL13}.

The reason we take special interest in succinct data structures,
is that the matrix-vector multiplication data structure by Larsen and
Williams~\cite{LW17} is, in fact, a succinct data structure. In addition to
answering queries in just $O(n^{7/4}/\sqrt{w})$ time, it is
systematic and just stores the input matrix as read-only, plus an
additional $r= O(n^{7/4} \sqrt{w})$ bits on the side. With the current
techniques for proving lower bounds for succinct data structures, we
actually have hopes of proving something stronger than the $t =
\Omega(n)$ lower bounds we can hope for if we just consider general
data structures. Since $\Pi = n^2$ for Boolean matrix-vector
multiplication, it seems reasonable to hope for something
of the form $t r= \Omega(n^2)$. Such lower bounds would shed interesting new
light on this central data structure problem and would bring us closer
to understanding the true complexity of matrix-vector multiplication.

\subsection{Our Results}
Our main results are near-matching upper and lower bounds for
systematic succinct data structures solving Boolean
matrix-vector multiplication. On the upper bound side, we improve on
the results of Larsen and Williams and give a new randomized data
structure with the following guarantees:
\begin{theorem}
\label{thm:UB}
Given any matrix $M\in \{0,1\}^{n \times n}$ there exists a systematic
succinct data
structure $R(M)$ consisting of $r=O(n^{3/2} (\sqrt{w}+\frac{\lg
  n}{\sqrt{w}}))$ additional bits, and a query algorithm such that,
given any $v \in \{0,1\}^n$ it computes $Mv$ over the Boolean semiring
with probability $\ge 1 -1/n$ by probing at most $O(n^{3/2} \sqrt{w}\lg n)$ cells of $R(M)$ and $M$, where $w$ is the word size.
\end{theorem}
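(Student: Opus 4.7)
The plan is to split the redundancy $r$ into two pieces matching the two summands $n^{3/2}\sqrt{w}$ and $n^{3/2}\lg n/\sqrt{w}$, and to answer queries by a three-branch algorithm. Set $\tau\eqdef\Theta(\sqrt{nw}/\lg n)$ and $\mu\eqdef\Theta(\sqrt{nw})$; call a row $i$ \emph{light} if $|R_i|\le\tau$, where $R_i=\{j:M_{ij}=1\}$, and \emph{heavy} otherwise. For every light row, $R(M)$ stores $R_i$ explicitly as a sorted list of $\lceil\lg n\rceil$-bit column indices, using $O(|R_i|\lg n)$ bits per row and $O(n\tau\lg n)=O(n^{3/2}\sqrt{w})$ bits in total. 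Using shared public randomness we also draw a uniformly random set $S\subseteq[n]$ of size $\sigma\eqdef\Theta(\sqrt{n/w}\lg n)$ and store the restriction $M_i|_S$ for every row $i$, contributing $n\sigma=O(n^{3/2}\lg n/\sqrt{w})$ bits. The two components add up to the promised $r$.

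The query algorithm processes each row $i$ independently, choosing a branch based on $|R_i|$ and $|V|\eqdef|\{j:v_j=1\}|$. For a light row it reads the stored list in $O(\tau\lg n/w)=O(\sqrt{n/w})$ probes and outputs $\bigvee_{j\in R_i}v_j$ at no extra probe cost, since $v$ is directly accessible. For a heavy row with $|V|\le\mu$ it brute-forces by probing $M_{ij}$ for every $j\in V$, spending $|V|\le\mu$ probes per row and $O(n\mu)=O(n^{3/2}\sqrt{w})$ in total. For a heavy row with $|V|>\mu$ it reads the sketch $M_i|_S$ in $O(\sigma/w)$ probes and outputs $1$ iff $M_i|_S\wedge v|_S\ne0$. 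The three budgets sum to $O(n^{3/2}\sqrt{w}\lg n)$.

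Correctness of the list and brute-force branches is exact. For the sketch branch, a standard estimate bounds the miss probability on any specific heavy row with $|R_i\cap V|\ge c$ by $(1-c/n)^\sigma\le\exp(-c\sigma/n)$; the choice of $\sigma$ makes this at most $n^{-2}$ as soon as $c=\Omega(\sqrt{nw})$, so a union bound over the $n$ rows gives success probability at least $1-1/n$, \emph{provided that} every positive heavy row in the $|V|>\mu$ regime actually has $|R_i\cap V|=\Omega(\sqrt{nw})$.

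The main obstacle is precisely this proviso: a heavy query $v$ and a heavy row $i$ whose intersection size $|R_i\cap V|$ happens to lie between $1$ and $o(\sqrt{nw})$ is not caught by any of the three branches---the sketch can silently miss, the sparse list is not stored for heavy rows, and brute force is too expensive. I plan to patch this by adding a symmetric \emph{dense-complement} component: for every row whose complement $[n]\setminus R_i$ has weight at most $\tau$, store the complement as a sparse list, so that the test "$R_i\cap V\ne\emptyset$" becomes "$V\not\subseteq[n]\setminus R_i$" and can be answered in $O(\sqrt{n/w})$ probes within the same $O(n^{3/2}\sqrt{w})$ bit budget; and to handle the remaining "medium-weight" rows by a finer random structure (e.g.\ a secondary sketch with carefully rescaled sample size, exploiting the slack in $\sigma$). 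Making this third component fit inside the redundancy budget while matching the sketch's detection threshold at every intermediate weight will be the delicate analytical step.
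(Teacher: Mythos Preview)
You have correctly identified the fatal gap in your own construction, and the patches you sketch do not close it. The dense-complement list only handles rows of weight $\ge n-\tau$; a row with $|R_i|=n/2$ is neither light nor co-light, and against a query with $|V|=n/2$ chosen so that $|R_i\cap V|=1$, your fixed sample $S$ of size $\sigma$ hits the single witness with probability $\sigma/n=o(1)$. No ladder of secondary sketches fixes this, because the relevant parameter is $|R_i\cap V|$, which can be $1$ regardless of $|R_i|$: you cannot know which scale to consult without already knowing the answer, and detecting a single $1$ among $|V|>\mu$ candidate columns costs $\Omega(|V|)$ probes in the worst case. A per-row data structure that depends only on the marginal weight $|R_i|$ is simply too coarse for this problem.

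The paper's construction is genuinely different and sidesteps the issue. Preprocessing iterates over \emph{all} pairs $(I,J)\in 2^{[n]}\times 2^{[n]}$ and greedily adds $(I,J)$ to a list $\mathcal{L}$ whenever the not-yet-covered part of $I\times J$ has size at least $n^{3/2}/\sqrt{w}$ and, for every $i\in I$, the uncovered part of row $i$ restricted to $J$ is $1/\sqrt{nw}$-sparse; the $1$-entries lying in the covered region $\mathcal{U}=\bigcup_{(I',J')\in\mathcal{L}}I'\times J'$ are listed explicitly. At query time, after discarding rows witnessed by those listed $1$'s, one samples $\Theta(\sqrt{nw}\lg n)$ entries per row \emph{directly from $M$}, restricted to the query columns $J$ outside $\mathcal{U}$; with high probability this removes every row whose uncovered-$J$ density exceeds $1/\sqrt{nw}$. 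The surviving pair $(I_2,J)$ then satisfies the sparsity condition yet was never placed in $\mathcal{L}$, so it must violate the size condition, giving $|(I_2\times J)\setminus\mathcal{U}|<n^{3/2}/\sqrt{w}$, and the algorithm finishes by reading those residual entries exhaustively. The crucial point is that preprocessing has implicitly prepared for \emph{every} possible query column set $J$ simultaneously; this is exactly the query-adaptive ingredient that your row-by-row scheme lacks.
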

Our data structure thus reduces the redundancy
from $O(n^{7/4} \sqrt{w})$ bits to $O(n^{3/2} (\sqrt{w}+\frac{\lg
  n}{\sqrt{w}}))$ bits. 
Moreover, our randomized query algorithm returns the correct answer with high probability, and improves the query time over the previously known deterministic algorithm from $O(n^{7/4}/
\sqrt{w})$ to $O(n^{3/2}\sqrt{w} \lg n)$. 

We complement our new upper bound by a near-matching lower bound:
\begin{theorem}
\label{thm:mainMvBoolean}
Assume that for every matrix $M \in \{0,1\}^{n \times n}$ there exists
a systematic succinct data structure $R=R(M)$ consisting of at most
$r=r(n)$ bits and there is a randomized algorithm that given any $v
\in \{0,1\}^n$ computes $Mv$ over the Boolean semiring with probability $\ge 1-1/n$ by probing $R$ and at most $t=t(n)$ entries from $M$, 
Then for $n \le r \le n^2/4$, $t \cdot r  = \Omega(n^3)$; otherwise for $r  < n$, $t = \Omega(n^2)$.
\end{theorem}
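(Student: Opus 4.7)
The plan is to prove the lower bound by an encoding argument against a uniformly random matrix $M\in\{0,1\}^{n\times n}$, whose Shannon entropy is exactly $n^2$.  By Yao's minimax principle applied to the $1-1/n$ success guarantee, it suffices to consider a deterministic query algorithm correct on a $(1-O(1/n))$-fraction of uniform matrices; the failing matrices can be patched into the encoding with an additive $O(n\log n)$ bits, which is negligible.  From now on I treat the algorithm as deterministic.

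The natural query set is the standard basis $\{e_1,\ldots,e_n\}$, since $Me_j=M_{*,j}$ is the $j$-th column of $M$ and the $n$ answers jointly determine $M$.  As a first-pass encoding I would write $R(M)$ ($r$ bits) followed by, for each $j\in[n]$, just the \emph{values} of the $t$ cells the algorithm probes when answering $e_j$; the probe \emph{locations} need not be stored because the decoder can replay the adaptive probe sequence from $R(M)$ together with the values read so far.  Since the decoder then reconstructs all of $M$, the encoding length $r+nt$ must be at least $n^2$, yielding $tr\ge r(n^2-r)/n$.  This is $\Omega(n^3)$ only near $r=\Theta(n^2)$ and degrades to $\Omega(n^{5/2})$ in the balanced regime $r\approx n^{3/2}$, so a sharper argument is needed.

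To obtain the tight $tr=\Omega(n^3)$ throughout $n\le r\le n^2/4$, I would strengthen the basic encoding with a cell-sampling argument in the spirit of Patrascu-Thorup.  Sample a random subset $S\subseteq[n]\times[n]$ of cells of $M$ with density $p\approx r/n^2$, and for each basis query $e_j$ determine whether its set of probes lies inside $S$; when that happens the decoder can recover $M_{*,j}$ from $R(M)$ and $M|_S$ alone, without needing to store probe values for query $e_j$.  Combining the covered-query savings with the cost of transmitting $M|_S$, and iterating the sampling (or pairing queries to sampled regions) to amplify the fraction of covered queries, should yield the desired $tr=\Omega(n^3)$.  The regime $r<n$ then follows by monotonicity: any data structure valid for parameters $(r_0,t_0)$ with $r_0<n$ is also valid for parameters $(n,t_0)$, so the main bound gives $n\cdot t_0=\Omega(n^3)$, i.e.\ $t_0=\Omega(n^2)$.

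The main obstacle is extracting the extra factor of $n$ from the cell-sampling argument: in naive accounting the savings from covered queries scale as $n\cdot n\cdot p^t$ while the cost of encoding $M|_S$ scales as $p n^2$, and these balance without forcing $t$ to be large.  Overcoming this requires a more refined probabilistic restriction or a cleverer choice of queries (possibly queries of weight larger than $1$, or multiple sampling rounds that exploit independence across rows of $M$), so that the encoding-length inequality genuinely separates the two scales.  A secondary technical step is aligning the sampling randomness with the algorithm's internal coins, which I would handle by first fixing a good coin string and then sampling conditioned on it.
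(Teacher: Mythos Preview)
Your proposal has a genuine gap: you do not actually prove $tr=\Omega(n^3)$, and the cell-sampling patch you sketch will not close it.  As you yourself observe, with uniform $M$ and basis queries the encoding inequality $r+nt\ge n^2$ is the best you can get, and it is essentially tight: the trivial algorithm that reads all $n$ entries of column $j$ answers $Me_j$ with $t=n$ and $r=0$.  So no stronger trade-off can be extracted from that distribution/query pair, and Patrascu--Thorup style cell sampling (which is designed to push logarithmic factors, not polynomial ones) does not change this.

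The missing idea, which the paper supplies, is to change both the hard distribution and the query set.  Instead of uniform $M$, take the family $\mathcal{M}$ of matrices whose rows are partitioned into $r/n$ blocks of width $n^2/r$ with exactly one $1$ per block, so $H(M)=r\lg(n^2/r)$.  Instead of basis queries, take $4r/n$ vectors $v^{(1)},\ldots,v^{(4r/n)}$ that, within each block, hit disjoint sets of $n^3/(4r^2)$ positions.  The point is twofold.  First, because the matrix is sparse, the probes made across all $4r/n$ queries can be encoded not by recording every probed value, but by recording only the \emph{positions within the probe sequence} at which a $1$ was seen; this costs $\lg\binom{4tr/n}{k}$ bits rather than $4tr/n$ bits.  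Second, each answer bit $[Mv^{(m)}]_i=0$ simultaneously rules out $n^3/(4r^2)$ candidate positions in \emph{every} block of row $i$; since each row has only $r/n$ ones, at least $3r/n$ of the $4r/n$ queries give $0$ in row $i$, so the answers alone eliminate $3/4$ of each block and drop $H(M\mid R,\mathcal{E})$ by $2r$ bits.  Comparing $r+\lg\binom{4tr/n}{k}$ against the $2r+k\lg(n^2/4r)$ bits of information gained, and using $x\lg(\alpha/x)\le\alpha/2$, yields $tr=\Omega(n^3)$.  Your padding argument for $r<n$ is fine and matches the paper; the handling of randomness is also along the right lines.
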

Our lower bound comes within polylogarithmic factors of the upper
bound and is in fact higher than what we could hope for with previous
techniques (recall that previous techniques peak at $t r=
\Omega(\Pi)$). The proof of our lower bound exploits the large number $m$
of possible queries and we essentially manage to derive lower bounds of the form $t r=
\Omega(\Pi \lg m ) = \Omega(n^3)$. Also note that our lower bound
allows the data structure to probe all of $R$, i.e. all the redundant
bits, and still it says that one has to read a lot from the matrix
itself. Another exciting point is that our lower bound shows that $t =
\Omega(n^2)$ for any $r  < n$. Previous lower bounds of $t r =
\Omega(\Pi)$ always degenerate linearly in $r$ all the way down to
$r=1$. In contrast, our lower bounds say that one cannot do much better 
(up to a constant factor) than reading all $n^2$ entries of $M$ if $r < n$.

Finally, we also study matrix-vector multiplication over
$\mathbb{F}_2$. Here we prove lower bounds even for the
vector-matrix-vector multiplication where one is given a pair of
vectors $u,v \in \mathbb{F}_2^n$ as queries and must compute
$u^{\dagger}Mv$. This problem has just one bit in the output, making
it more difficult to prove lower bounds. Nonetheless, we prove the
following lower bound:
\begin{theorem}
\label{thm:mainUMVF2}
Assume that for every matrix $M \in \mathbb{F}_2^{n \times n}$ there exists a data structure $R=R(M)$ consisting of at most $r=r(n)$ bits and there is an algorithm that given $u,v \in \mathbb{F}_2^n$ computes $u^{\dagger}Mv$ by probing $R$ and at most $t=t(n)$ entries from $M$, then for $n \le r \le n^2/4$, $t \cdot r = \Omega(n^3/\lg n)$. Moreover, if $r < n$ then $t = \Omega(n^2/\lg n)$.
\end{theorem}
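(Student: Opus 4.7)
The plan is to prove Theorem~\ref{thm:mainUMVF2} by an encoding argument combined with cell sampling, adapting the strategy behind Theorem~\ref{thm:mainMvBoolean} to the vector-matrix-vector formulation over $\mathbb{F}_2$, where each query reveals only a single bit of information. The key observation is that the answer $u^\dagger M v$ is an $\mathbb{F}_2$-linear functional of $M$ with coefficient matrix $u v^\dagger \in \mathbb{F}_2^{n \times n}$, and the family $\{u v^\dagger : u, v \in \mathbb{F}_2^n\}$ spans $\mathbb{F}_2^{n \times n}$ (choosing $u = e_i$ and $v = e_j$ already yields the standard basis $E_{ij}$). Consequently, a sufficiently rich set of query answers pins down $M$ via a linear system over $\mathbb{F}_2$.

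First I would assume for contradiction a data structure with redundancy $r$ and per-query probe budget $t$ violating the claimed bound, and apply cell sampling: pick a uniformly random subset $S \subseteq [n] \times [n]$ of size $s$, for a parameter $s$ to be tuned. For each query $q = (u, v)$ and each matrix $M$, the set $X(q, M) \subseteq [n] \times [n]$ of cells probed from $M$ has size at most $t$, and a standard hypergeometric estimate gives $\Pr_S[X(q, M) \subseteq S] \ge (s/n^2)^t$. Summing over all $2^{2n}$ possible queries, the expected number of $S$-covered queries $\mathcal{Q}_S(M) = \{q : X(q, M) \subseteq S\}$ is at least $2^{2n}(s/n^2)^t$.

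Next, for suitably tuned $s$, I would use a probabilistic/averaging argument to locate $S$ for which simultaneously (i) $|\mathcal{Q}_S(M)|$ exceeds $n^2 - s$ by a sufficient margin and (ii) the restricted coefficient matrices $\{u v^\dagger|_{S^c} : (u,v) \in \mathcal{Q}_S(M)\}$ span $\mathbb{F}_2^{S^c}$. Fixing such an $S$, I encode $M$ as the tuple $(R(M), M|_S, S)$ of total length $r + s + \log\binom{n^2}{s} = r + O(s \log n)$ bits. The decoder reads $R$ and $M|_S$, reconstructs $S$, and then simulates the data structure on every candidate query: each simulation succeeds exactly when the query is $S$-covered (the decoder aborts as soon as the data structure attempts to probe a cell outside $S$), in which case it produces the $1$-bit answer $u^\dagger M v$. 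The collected answers together with $M|_S$ give a linear system over $\mathbb{F}_2$ whose unique solution, by (ii), is $M|_{S^c}$. Injectivity of this encoding forces $r + O(s \log n) \ge n^2$, while the cell-sampling bound forces $s \gtrsim n^2 \cdot 2^{-2n/t}$; balancing these two constraints yields $tr = \Omega(n^3 / \log n)$ in the regime $n \le r \le n^2/4$ and $t = \Omega(n^2 / \log n)$ for $r < n$.

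The main obstacle I anticipate is the rank argument in step (ii). Rank-$1$ matrices over $\mathbb{F}_2$ carry structured linear dependencies (unlike generic random vectors), and the set $\mathcal{Q}_S(M)$ itself depends on the potentially adversarial behavior of the data structure on $M$, so one cannot simply invoke a vanilla "random matrix has full rank" statement. Controlling the probability of rank deficiency of $\{u v^\dagger|_{S^c}\}$ through a careful probabilistic argument, appropriately averaged over $M$ and $S$, is the technical heart of the proof; this is also where the $\log n$ slack relative to Theorem~\ref{thm:mainMvBoolean} enters, both through the $\log\binom{n^2}{s}$ cost of specifying $S$ and through the extra queries needed to guarantee spanning restricted to $S^c$.
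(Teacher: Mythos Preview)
Your proposal follows a global cell-sampling route that is structurally different from the paper's argument, and the step you yourself flag as ``the main obstacle''---controlling the rank of $\{uv^\dagger|_{S^c} : (u,v) \in \mathcal{Q}_S(M)\}$---is a genuine gap, not just a technicality. Because $(u,v) \mapsto u^\dagger M v$ is bilinear rather than linear, the set $\mathcal{Q}_S(M)$ carries no subspace structure, and the data structure's adaptive probes can force it to be highly degenerate. For instance, a data structure that spends one of its $t$ probes on cell $(1,1)$ whenever $u_1 = 1$ ensures that if $(1,1)\notin S$ (which happens with probability $1-s/n^2$), every $S$-covered query has $u_1=0$; then every $uv^\dagger$ with $(u,v)\in\mathcal{Q}_S(M)$ vanishes on the entire first row, and row~$1$ of $M$ on $S^c$ is unrecoverable. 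This kind of blocking can be iterated across rows at a cost of one probe each, producing rank deficiency that scales with $t$, and you give no argument that the encoding budget absorbs it. The ``careful probabilistic argument, appropriately averaged over $M$ and $S$'' you allude to would have to defeat exactly this adversarial behavior, and nothing in the proposal indicates how.

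The paper avoids this entirely by a different decomposition. It partitions $M$ into $n/B$ segments of $B$ consecutive rows, picks by averaging a segment $M_{i^*}$ that receives few probes in expectation (Lemma~\ref{lem:expected-probe}), and then \emph{fixes a single vector} $u^*$ (via Lemma~\ref{lem:conditional-info}) such that the $n$-bit string $M_{i^*\mid u^*} := (u^*_{I_{i^*}})^\dagger M_{i^*}$ still has conditional entropy at least $n-8r/B$ given $R$ and $M_{-i^*}$. With $u^*$ fixed, the map $v \mapsto (u^*)^\dagger M v$ is \emph{linear} in $v$, so the set of $v$ for which the answer is determined by $R$, $M_{-i^*}$, and a small sampled set $\mathcal{S}\subseteq M_{i^*}$ is automatically an $\mathbb{F}_2$-subspace; a counting argument bounds its codimension by $O(tB\lg n/n)$, and that many extra bits suffice to recover $(u^*)^\dagger M v$ for all $v$, hence $M_{i^*\mid u^*}$. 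The fix-$u^*$-and-use-linearity-in-$v$ trick is precisely what replaces your rank argument, and the row-segment entropy lemma is what lets the paper recover only a single $n$-bit vector rather than all of $M$. These two ideas are the missing ingredients in your outline.
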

We believe it is quite remarkable that we can get $t = \tilde\Omega(n^2)$ lower
bounds for any $r < n$ for this 1-bit output problem. Since any $u^{\dagger}Mv$ query can be answered by just taking inner
product between $u$ and $Mv$, as a corollary of the above we also get
the same trade-off for matrix-vector problem over $\mathbb{F}_2$. To the best
of our knowledge, prior to this result there was no trade-off known
for such a small sized field. Our proof is completely information theoretic and based on an encoding argument. It is worth noting that our proof technique can be generalized to give lower bound for the case when the query algorithm may err with probability at most $1/64$ (though any small constant probability will work) on average over the random choices of $M,u,v$.

Finally, we also consider vector-matrix-vector multiplication over the
Boolean semiring. Since one can compute $Mv$ by running the following
sequence of $n$ queries:
$(e^{(1)})^{\dagger}Mv,\cdots,(e^{(n)})^{\dagger}Mv$ where
$\{e^{(i)}\}_{i \in [n]}$ is the standard basis over $\{0,1\}^n$, from
Theorem~\ref{thm:mainMvBoolean} we get a lower bound of $tr\ge \Omega(n^2)$ for the
Boolean $u^{\dagger}Mv$ problem. Instead of this simple reduction,
even if we use the much more elegant reduction in~\cite{HKNS15}, we will
not be able to derive any better lower bound from the above
theorem. However in Section~\ref{sec:Bool-uMv} we show how to extend
the proof of Theorem~\ref{thm:mainMvBoolean} to get a $t = \Omega(n/\lg n)$ bound on the
worst case number of probes into $M$ for the Boolean vector-matrix-vector
problem with $r \le n^2/4$.

\section{Preliminaries}
\label{sec:prelims}
\paragraph*{Notations. } For $k \in \mathbb{N}$, let $[k]$ denote the set $\{1,2,\ldots,k\}$. For every $v \in \{0,1\}^n$ and $i \in [n]$, let $v_i$ denote the $i$-th entry of $v$. All the logarithms we consider are over base $2$. We use the notation $x \in_R {\cal X}$ to denote that $x$ is drawn uniformly at random from the domain ${\cal X}$.

\paragraph*{Information Theory. }
Throughout this paper we use several basic definitions and notations from information theory. For further exposition readers may refer to any standard textbook on information theory (e.g.~\cite{CT06}).

Let $X,Y$ be discrete random variables on a common probability space. Let $p(x),p(y),p(x,y)$ denote $\Pr[X=x], \Pr[Y=y], \Pr[X=x,Y=y]$ respectively. 
The \emph{entropy} of $X$ is defined as
 $H(X):= - \sum_x p(x) \lg p(x)$. 
The \emph{joint entropy} of $(X,Y)$ is defined as $H(X,Y) \eqdef - \sum_{(x,y)} p(x,y) \lg (p(x,y))$. 
The \emph{mutual information} between $X$ and $Y$ is defined as $I(X;Y) \eqdef \sum_{(x,y)}{p(x,y) \lg \frac{p(x,y)}{p(x)p(y)}}$ and the \emph{conditional entropy} of $Y$ given $X$ is defined as $H(Y \mid X) \eqdef H(Y) - I(X;Y)$.

\begin{proposition}[Chain Rule of Entropy]
\label{prop:chnentro} 
Then $H(X,Y)=H(X)+H(Y \mid X)$.
\end{proposition}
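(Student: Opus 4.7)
The plan is to prove the chain rule purely by algebraic manipulation of the definitions, since the proposition only asserts an identity among $H(X,Y)$, $H(X)$, and $H(Y\mid X)$. Since the paper defines $H(Y\mid X) \eqdef H(Y) - I(X;Y)$, it suffices to show that $H(X,Y) = H(X) + H(Y) - I(X;Y)$, which I would derive by expanding everything according to the definitions just stated in the preliminaries.

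First I would rewrite each of the single-variable entropies using the marginal identities $p(x) = \sum_y p(x,y)$ and $p(y) = \sum_x p(x,y)$, so that
$$H(X) = -\sum_{(x,y)} p(x,y) \lg p(x), \qquad H(Y) = -\sum_{(x,y)} p(x,y) \lg p(y).$$
Next I would expand the mutual information term using $\lg \frac{p(x,y)}{p(x)p(y)} = \lg p(x,y) - \lg p(x) - \lg p(y)$, yielding
$$I(X;Y) = \sum_{(x,y)} p(x,y) \lg p(x,y) - \sum_{(x,y)} p(x,y) \lg p(x) - \sum_{(x,y)} p(x,y) \lg p(y).$$

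Substituting these three expansions into $H(X) + H(Y) - I(X;Y)$ and collecting terms, the $\lg p(x)$ and $\lg p(y)$ contributions cancel in pairs, leaving precisely $-\sum_{(x,y)} p(x,y) \lg p(x,y) = H(X,Y)$. Rearranging gives the claimed identity $H(X,Y) = H(X) + H(Y \mid X)$.

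There is essentially no obstacle here: the only subtlety is handling pairs $(x,y)$ with $p(x,y) = 0$, which I would dispose of using the standard convention $0 \lg 0 = 0$ so that all sums are over the support of the joint distribution and the marginalization identities remain valid. No nontrivial inequality (e.g.\ Jensen or log-sum) is needed.
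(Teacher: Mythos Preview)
Your proof is correct, but note that the paper does not actually give a proof of this proposition: it is stated without proof in the preliminaries as a standard information-theoretic fact, with a reference to a textbook for further exposition. Your derivation directly from the paper's definitions (in particular its definition $H(Y\mid X)\eqdef H(Y)-I(X;Y)$) is exactly the kind of routine verification one would expect, and there is nothing to compare against.
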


The seminal work of Shannon~\cite{Sha48} establishes a connection between the entropy and the expected length of an optimal code encoding a random variable.
\begin{theorem}[Shannon's Source Coding Theorem~\cite{Sha48}]
Let $X$ be a discrete random variable over domain ${\cal X}$. Then for every uniquely decodable code $C :{\cal X} \to \{0,1\}^*$,  $\mathbb{E}(|C(X)|) \ge H(X)$. Moreover, there exists a uniquely decodable code $C:{\cal X} \to \{0,1\}^*$ such that $\mathbb{E}(|C(X)|) \le H(X)+1$.
\end{theorem}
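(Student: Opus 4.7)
The plan is to establish the two halves of the theorem separately, with the central tool being McMillan's inequality for uniquely decodable codes. First I would prove the lower bound $\mathbb{E}(|C(X)|) \ge H(X)$ via the following chain. Given any uniquely decodable code $C$ with codeword lengths $\ell_x := |C(x)|$, McMillan's inequality asserts $\sum_{x \in {\cal X}} 2^{-\ell_x} \le 1$. Setting $Z := \sum_x 2^{-\ell_x}$ and defining the sub-probability distribution $q(x) := 2^{-\ell_x}/Z$, one can rewrite the expected codeword length as $\mathbb{E}(|C(X)|) = \sum_x p(x) \ell_x = -\sum_x p(x) \lg q(x) - \lg Z$. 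By the non-negativity of the Kullback--Leibler divergence (equivalently, Gibbs' inequality), $-\sum_x p(x) \lg q(x) \ge -\sum_x p(x) \lg p(x) = H(X)$, and since $Z \le 1$ gives $-\lg Z \ge 0$, the lower bound follows.

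For the upper bound, I would construct the Shannon--Fano code explicitly. Define $\ell_x := \lceil \lg(1/p(x)) \rceil$ for each $x$ in the support of $X$. These lengths satisfy Kraft's inequality because $\sum_x 2^{-\ell_x} \le \sum_x 2^{-\lg(1/p(x))} = \sum_x p(x) = 1$. A standard construction over a full binary tree of depth $\max_x \ell_x$ (assigning codewords greedily in order of increasing length and pruning their descendants) then produces a prefix-free, and hence uniquely decodable, code $C$ with $|C(x)| = \ell_x$. Taking expectations yields $\mathbb{E}(|C(X)|) = \sum_x p(x) \lceil \lg(1/p(x)) \rceil \le \sum_x p(x)(\lg(1/p(x)) + 1) = H(X) + 1$.

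The main obstacle is establishing McMillan's inequality itself, which is the only step requiring a genuinely combinatorial argument, since we are given only unique decodability rather than a prefix-free structure. The standard approach is to bound the $k$-th power: $\bigl(\sum_x 2^{-\ell_x}\bigr)^k = \sum_{(x_1,\ldots,x_k)} 2^{-(\ell_{x_1}+\cdots+\ell_{x_k})}$, and group terms according to the total length $L$ of the concatenation $C(x_1)C(x_2)\cdots C(x_k)$. Unique decodability of the natural extension of $C$ to tuples ensures that distinct $k$-tuples yield distinct binary strings, so at most $2^L$ tuples contribute to each length $L$. Writing $\ell_{\max} := \max_x \ell_x$, this gives $\bigl(\sum_x 2^{-\ell_x}\bigr)^k \le \sum_{L=k}^{k\ell_{\max}} 2^L \cdot 2^{-L} \le k \ell_{\max}$; taking $k \to \infty$ and a $k$-th root forces $\sum_x 2^{-\ell_x} \le \lim_{k \to \infty} (k\ell_{\max})^{1/k} = 1$. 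Once this inequality is in hand, both halves of the theorem follow from the routine manipulations above.
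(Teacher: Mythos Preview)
Your proof is correct and is in fact the standard textbook argument (McMillan's inequality via the $k$-th power trick for the converse, Shannon--Fano codes for achievability). However, there is nothing to compare against: the paper does not supply a proof of this theorem. It is stated in the Preliminaries section as a classical cited result from~\cite{Sha48}, to be used as a black box in the subsequent encoding arguments, and no proof is given anywhere in the paper.

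One small remark: your McMillan argument implicitly assumes $\ell_{\max} = \max_x \ell_x$ is finite, which need not hold when ${\cal X}$ is countably infinite. This is easily patched by first proving $\sum_{x \in F} 2^{-\ell_x} \le 1$ for every finite $F \subseteq {\cal X}$ via exactly your argument, and then taking the supremum over $F$; the rest of your chain then goes through unchanged.
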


\section{Upper Bound for Boolean Matrix-Vector Problem}
\label{sec:UB-Boolean-Mv}
In this section we prove Theorem~\ref{thm:UB} by introducing an efficient cell probe data structure for solving Boolean matrix-vector problem (with high probability). Let us first recall the theorem.

\begin{reptheorem}{thm:UB}
Given any matrix $M\in \{0,1\}^{n \times n}$ there exists a data structure $R$ consisting of $O(n^{3/2} (\sqrt{w}+\frac{\lg n}{\sqrt{w}}))$ bits, and a query algorithm such that, given $v \in \{0,1\}^n$ it computes $Mv$ with high probability by probing at most $O(n^{3/2} \sqrt{w}\lg n)$ cells of $R$ and $M$, where $w$ is the word size.
\end{reptheorem}

\paragraph{Preprocessing. }
In what follows, we present an algorithm that, given a matrix $M \in \{0,1\}^{n \times n}$ constructs the data structure guaranteed in Theorem~\ref{thm:UB}. Loosely speaking, the data structure is composed of a list ${\cal L}$ consisting of pairs $(I,J)$, and an encoding ${\cal E}$ of all the entries $(i,j) \in \bigcup_{(I,J) \in {\cal L}}(I \times J)$ such that $M_{(i,j)}=1$. 
The key step of the preprocessing algorithm is deciding which set-pairs to add to the list.
Informally, going over all possible pairs $(I,J)$, the algorithm adds a pair $(I,J)$ to ${\cal L}$, if there exists a large subset of $I \times J$ of entries not "covered" by the pairs already added in the list, and such that the "uncovered" part of the submatrix $M_{I,J}$ contains "few" $1$-entries per-row. The algorithm is formally described as Algorithm~\ref{alg:preprocess}.

\begin{algorithm}[H]
\begin{algorithmic}[1]
\STATE let ${\cal L} \leftarrow \emptyset$.
\FORALL{$(I,J) \in 2^{[n]}\times2^{[n]}$}
\STATE let ${\cal U} \eqdef \bigcup_{(I',J') \in {\cal L}}(I'\times J')$.
\STATE add $(I,J)$ to ${\cal L}$ if all of the following conditions hold. \label{l:cond}
\begin{enumerate}
	\item \label{itm:genSize} $\left| (I \times J) \setminus {\cal U} \right| \ge \frac{n^{3/2}}{\sqrt{w}}$; and
	\item \label{itm:density} $\forall i \in I$, $\Pr_{j \in J : (i,j) \notin {\cal U}} [M_{(i,j)}=1] \le \frac{1}{\sqrt{nw}}$.
\end{enumerate}
\ENDFOR
\STATE let ${\cal E}$ be the list of all $(i,j) \in \bigcup_{(I,J) \in {\cal L}}(I \times J)$ such that $M_{(i,j)}=1$.
\RETURN $({\cal L}, {\cal E})$
\caption{Preprocessing $M$}
\label{alg:preprocess}
\end{algorithmic}
\end{algorithm}

The following claim implies the first part of Theorem~\ref{thm:UB}.

\begin{claim}
\label{clm:sizeUB}
The string $({\cal L}, {\cal E})$ can be encoded using at most $O(n^{3/2} (\sqrt{w}+\frac{\lg n}{\sqrt{w}}))$ bits.
\end{claim}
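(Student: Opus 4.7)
The plan is to bound the bit-cost of ${\cal L}$ and ${\cal E}$ separately, using the two admissibility conditions from line~\ref{l:cond} of Algorithm~\ref{alg:preprocess} in direct correspondence: Condition~\ref{itm:genSize} controls $|{\cal L}|$, while Condition~\ref{itm:density} controls $|{\cal E}|$.

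First I would bound $|{\cal L}|$. Every pair $(I,J)$ added to ${\cal L}$ enlarges ${\cal U}$ by at least $n^{3/2}/\sqrt{w}$ elements, because Condition~\ref{itm:genSize} is evaluated with respect to the current ${\cal U}$ and the freshly added entries $(I\times J)\setminus {\cal U}$ are disjoint from everything previously in ${\cal U}$. Since ${\cal U}\subseteq [n]\times [n]$ has at most $n^2$ elements throughout, we get $|{\cal L}|\le n^2 / (n^{3/2}/\sqrt{w}) = \sqrt{wn}$. Each pair $(I,J)$ can be written down as two characteristic vectors of $[n]$, i.e.\ $2n$ bits, so encoding ${\cal L}$ costs at most $2n\cdot \sqrt{wn} = O(n^{3/2}\sqrt{w})$ bits.

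Next I would bound $|{\cal E}|$. Consider the pairs $(I_1,J_1),(I_2,J_2),\ldots$ in the order they are added, and let ${\cal U}_k$ denote the value of ${\cal U}$ just before $(I_k,J_k)$ is added. By Condition~\ref{itm:density}, for each $i\in I_k$ the number of $j$ with $(i,j)\in (I_k\times J_k)\setminus {\cal U}_k$ and $M_{(i,j)}=1$ is at most $|\{j\in J_k : (i,j)\notin {\cal U}_k\}|/\sqrt{nw}$. Summing over $i\in I_k$ yields that the number of $1$-entries of $M$ in $(I_k\times J_k)\setminus {\cal U}_k$ is at most $|(I_k\times J_k)\setminus {\cal U}_k|/\sqrt{nw}$. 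Now sum over $k$: the sets $(I_k\times J_k)\setminus {\cal U}_k$ are pairwise disjoint by construction and their union equals $\bigcup_{(I,J)\in {\cal L}}(I\times J)\subseteq [n]\times [n]$, so the total number of $1$-entries listed in ${\cal E}$ is at most $n^2/\sqrt{nw}=n^{3/2}/\sqrt{w}$. Each such entry can be written as a coordinate pair $(i,j)\in [n]\times [n]$ using $2\lceil \lg n\rceil$ bits, so ${\cal E}$ takes $O(n^{3/2}\lg n/\sqrt{w})$ bits.

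Combining the two bounds gives the stated $O(n^{3/2}(\sqrt{w}+\lg n/\sqrt{w}))$ total. The only step that requires a little care is the accounting in the second bullet: one must exploit disjointness of the increments $(I_k\times J_k)\setminus {\cal U}_k$ rather than bounding each pair's new $1$-entries by the trivial $|I_k|\cdot |J_k|/\sqrt{nw}$, since the latter summed over $k$ can be far larger than $n^2/\sqrt{nw}$. Everything else is a routine counting argument against the two conditions.
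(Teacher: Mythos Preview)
Your proposal is correct and follows the same two-part decomposition as the paper's own proof: bound $|{\cal L}|$ via Condition~\ref{itm:genSize} and bound $|{\cal E}|$ via Condition~\ref{itm:density}. In fact, your treatment of the ${\cal E}$ bound is more careful than the paper's: the paper simply asserts that Condition~\ref{itm:density} implies the overall $1$-density in $\bigcup_{(I,J)\in{\cal L}}(I\times J)$ is at most $1/\sqrt{nw}$, whereas you explicitly justify this by summing the per-increment bounds over the disjoint sets $(I_k\times J_k)\setminus {\cal U}_k$, which is indeed the right way to make the density claim rigorous.
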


\begin{proof}
Observe the conditions in line~\ref{l:cond} of the algorithm.
Due to condition~\ref{itm:genSize}, there are at most $n^{1/2}\sqrt{w}$ many different pairs in the list ${\cal L}$, and each pair can be encoded using only $2n$ bits (an indicator bit per row and column). Therefore ${\cal L}$ can be encoded using at most $O(n^{3/2} \sqrt{w})$ bits. Condition~\ref{itm:density} asserts that the density of $1$-entries in the submatrix covered by all the subsets of rows and columns listed in ${\cal L}$ is at most $\frac{1}{n^{1/2}\sqrt{w}}$. Each such entry can be encoded using $2\lg n$ bits and hence $\mathcal{E}$ can be encoded using at most $O(n^{3/2} \lg n/\sqrt{w})$ bits.
\end{proof}

\paragraph{Answering queries. }
To prove the second part of the theorem, we give a query algorithm that receives $v \in \{0,1\}^n$ and gets access to $M$, as well as to ${\cal L}$ and ${\cal E}$, and computes $Mv$ with high probability.
Let $J = \{j \in [n] : v_j=1\}$ be the set of columns of $M$ which are relevant for computing $Mv$, and let ${\cal U} \eqdef \bigcup_{(I',J') \in {\cal L}}(I' \times J')$ be the set of all matrix indices that appear in ${\cal L}$. 
Starting with the set $I = [n]$ of all possible rows, the algorithm "prunes" $I$ throughout the execution. Whenever an index $i$ is removed from $I$, the algorithm fixes $u_i \in \{0,1\}$. 
During the first step, the algorithm goes over ${\cal E}$. If for some $i \in I$, there exists $j \in J$ such that $(i,j)$ is encoded in ${\cal E}$, the algorithm sets $u_i = 1$ and removes $i$ from $I$.
During the second step, for every $i \in I$ the algorithm samples $2\sqrt{n} \lg n$ entries $(i,j)$ from the set $(\{i\} \times J) \setminus {\cal U}$. If $M_{(i,j)}=1$ for at least one of these entries, the algorithm sets $u_i = 1$ and removes $i$ from $I$.
During the third step, the algorithm examines the set ${\cal R} \eqdef \{(i,j) : i \in I\; , \; j \in J \; and \; (i,j)\notin {\cal U}\}$ of remaining entries. If this set has more than $O(n^{3/2}\lg n / \sqrt{w})$ elements, the algorithm reports "failure". Otherwise for every $i \in I$, the algorithm probes all entries $(i,j) \in {\cal R}$. If $M_{(i,j)}=1$ for at least one of these entries, the algorithm sets $u_i = 1$ and removes $i$ from $I$. Otherwise, the algorithm sets $u_i = 0$ and removes $i$ from $I$.
The algorithm terminates either by reporting "failure" or by returning $u = (u_1,\ldots,u_n)$. It is formally described as Algorithm~\ref{alg:query}.

\begin{algorithm}[th]
\begin{algorithmic}[1]
\STATE let $I \leftarrow [n], J \leftarrow \{j \in [n] : v_j =1\}$.
\STATE let ${\cal U} \leftarrow \bigcup_{(I',J') \in {\cal L}}(I'\times J')$.
\FORALL{$(i,j)$ encoded in ${\cal E}$} \label{l:firstStepStart}
\IF{$i \in I$ and $j \in J$}
\STATE set $u_i = 1$ and let $I \leftarrow I \setminus \{i\}$. \label{l:firstStepEnd}
\ENDIF
\ENDFOR
\FORALL{$i \in I$} \label{l:secondStepStart}
\STATE sample uniformly (with repetition) $2 \sqrt{nw} \lg n$ entries from $\{j \in J : (i,j) \notin {\cal U}\}$. \\ 
denote the set of sampled entries $J''(i)$.
\IF{there exists $j \in J''(i)$ such that $M_{(i,j)}=1$}
\STATE set $u_i \leftarrow 1$ and let $I \leftarrow I \setminus \{i\}$. \label{l:secondStepEnd}
\ENDIF
\ENDFOR
\STATE let ${\cal R} \leftarrow (I \times J) \setminus {\cal U}$. \label{l:thirdStepStart}
\IF{$|{\cal R}| \ge n^{3/2} / \sqrt{w}$} 
\RETURN "failure"
\ENDIF
\FORALL{$i \in I$}
\IF{there exists $j \in J$ such that $(i,j) \in {\cal R}$ and $M_{(i,j)} = 1$}
\STATE set $u_i \leftarrow 1$ and let $I \leftarrow I \setminus \{i\}$.
\ELSE
\STATE set $u_i \leftarrow 0$ and let $I \leftarrow I \setminus \{i\}$.
\ENDIF
\ENDFOR 
\RETURN $u = (u_1,\ldots,u_n)$
\caption{Querying $Mv$}
\label{alg:query}
\end{algorithmic}
\end{algorithm}

We will first show that the algorithm probes "few" bits.
Since $({\cal L},{\cal E})$ can be encoded using at most $O(n^{3/2} (\sqrt{w}+\frac{\lg n}{\sqrt{w}}))$ bits, and since the algorithm samples at most $2 \sqrt{nw} \lg n$ entries from each row of the matrix, we conclude the following.
\begin{lemma}
Algorithm~\ref{alg:query} probes at most $O(n^{3/2}\sqrt{w}\lg n)$ bits of $M,{\cal L}, {\cal E}$ throughout the execution.
\end{lemma}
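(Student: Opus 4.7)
The plan is to bound the probes made in each of the three distinct sources of probes separately and then add them, showing that a single term dominates. I would first decompose the execution of Algorithm~\ref{alg:query} into the three natural phases (the scan of $\mathcal{E}$ in lines~\ref{l:firstStepStart}--\ref{l:firstStepEnd}, the random sampling in lines~\ref{l:secondStepStart}--\ref{l:secondStepEnd}, and the exhaustive examination of $\mathcal{R}$ starting at line~\ref{l:thirdStepStart}) plus the implicit reads of $\mathcal{L}$ needed to compute $\mathcal{U}$.

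The probes into $\mathcal{L}$ and $\mathcal{E}$ are trivially bounded, regardless of what order the algorithm accesses them: by Claim~\ref{clm:sizeUB}, the entire encoding of $({\cal L},{\cal E})$ occupies $O\!\left(n^{3/2}\bigl(\sqrt{w}+\tfrac{\lg n}{\sqrt{w}}\bigr)\right)$ bits, so this many probes suffice to read everything stored on the side. This covers both the construction of $\mathcal{U}$ and the first phase's scan over $\mathcal{E}$.

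The probes into $M$ split into two contributions. In the second phase, for each $i \in I \subseteq [n]$ the algorithm samples a multiset $J''(i)$ of size $2\sqrt{nw}\lg n$, each of which costs one probe to $M$; summing over at most $n$ values of $i$ gives at most $2n^{3/2}\sqrt{w}\lg n = O(n^{3/2}\sqrt{w}\lg n)$ probes. In the third phase, the algorithm first checks $|\mathcal{R}|$ (which is computable from the already-read $\mathcal{L}$, so incurs no further probes to $M$) and proceeds only if $|\mathcal{R}| < n^{3/2}/\sqrt{w}$; in that case it probes each element of $\mathcal{R}$ at most once, contributing only $O(n^{3/2}/\sqrt{w})$ additional probes.

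Adding the three contributions and noting that $n^{3/2}\sqrt{w}\lg n$ dominates both $n^{3/2}(\sqrt{w}+\lg n/\sqrt{w})$ and $n^{3/2}/\sqrt{w}$ yields the claimed $O(n^{3/2}\sqrt{w}\lg n)$ bound. There is essentially no obstacle here beyond careful bookkeeping: the only subtlety worth flagging is that reading $\mathcal{U}$ should not be double-counted (it is subsumed by the bound on reading $\mathcal{L}$), and that the conditional failure check in the third phase is what guarantees the $|\mathcal{R}|$ bound used in the final summation.
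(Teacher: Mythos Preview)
Your proposal is correct and follows essentially the same approach as the paper, which simply notes that $(\mathcal{L},\mathcal{E})$ occupies $O(n^{3/2}(\sqrt{w}+\lg n/\sqrt{w}))$ bits and that the sampling phase touches at most $2\sqrt{nw}\lg n$ entries per row. If anything, your version is slightly more thorough: you explicitly account for the third-phase probes into $\mathcal{R}$ (bounded by $n^{3/2}/\sqrt{w}$ when the algorithm does not fail), which the paper's one-line justification leaves implicit.
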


To finish the proof of Theorem~\ref{thm:UB} we show that with high probability, the algorithm returns the correct answer. To this end, fix $v \in \{0,1\}^n$, and consider an execution of Algorithm~\ref{alg:query} on $v$.
Let $I_1, I_2$ be the set $I$ after the first step of the algorithm (lines~\ref{l:firstStepStart}-\ref{l:firstStepEnd}), and the second step of the algorithm (lines~\ref{l:secondStepStart}-\ref{l:secondStepEnd}) respectively. In these notations, ${\cal R} \leftarrow (I_2 \times J) \setminus {\cal U}$. Finally, let $$I_1^* \eqdef \left\{i \in I_1 : \Pr_{j \in J : (i,j) \notin {\cal U}} [M_{(i,j)}=1] > \frac{1}{\sqrt{nw}}\right\} \;.$$

\begin{lemma}
Algorithm~\ref{alg:query} fails with probability at most $\tfrac{1}{n}$. Moreover, if the algorithm does not fail, then it returns $Mv=u$.
\end{lemma}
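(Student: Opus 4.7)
The plan is to establish the two claims of the lemma separately: correctness whenever the algorithm does not fail, and the $1/n$ upper bound on the failure probability.

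\textbf{Correctness.} I would show entry-wise that the returned vector $u$ equals $Mv$. In each of the three steps the algorithm sets $u_i \leftarrow 1$ only after explicitly witnessing some $j \in J$ with $M_{(i,j)} = 1$, so $u_i = 1$ always implies $(Mv)_i = 1$. For the converse, fix $i$ with $(Mv)_i = 1$ and a witness $j^* \in J$ satisfying $M_{(i, j^*)} = 1$. If $(i, j^*) \in {\cal U}$ then $(i, j^*) \in {\cal E}$ by definition of ${\cal E}$, so the first step catches it. Otherwise $(i, j^*) \notin {\cal U}$, and either $i$ is already removed (with $u_i = 1$) in steps 1 or 2, or $i \in I_2$ and $(i, j^*) \in {\cal R}$, in which case the third step's exhaustive scan finds it. Finally, if $u_i$ is set to $0$ in the third step, then for every $j \in J$ either $(i,j) \in {\cal U}$ (and then $M_{(i,j)} = 0$ since $i$ survived step~1) or $(i,j) \in {\cal R}$ (and then $M_{(i,j)} = 0$ since step~3 found no witness), so $(Mv)_i = 0$.

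\textbf{Failure probability.} The only failure mode is $|{\cal R}| \ge n^{3/2}/\sqrt{w}$ at line~\ref{l:thirdStepStart}. I would first bound the probability that some row of $I_1^*$ survives step~2. For each $i \in I_1^*$, the $2\sqrt{nw}\lg n$ independent samples from $\{j \in J : (i,j) \notin {\cal U}\}$ each hit a $1$-entry with probability exceeding $1/\sqrt{nw}$, so the probability that all of them miss is at most
\[
\left(1 - \tfrac{1}{\sqrt{nw}}\right)^{2\sqrt{nw}\lg n} \;\le\; e^{-2\lg n} \;\le\; \tfrac{1}{n^2}.
\]
A union bound over the (at most $n$) rows of $I_1^*$ then gives $\Pr[I_2 \cap I_1^* \ne \emptyset] \le 1/n$. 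Conditional on $I_2 \cap I_1^* = \emptyset$, every $i \in I_2$ satisfies condition~\ref{itm:density} with respect to the final ${\cal U}$, i.e. the pair $(I_2, J)$ itself passes condition~\ref{itm:density}. I would now appeal to a maximality property of the preprocessing: interpreting the FORALL loop of Algorithm~\ref{alg:preprocess} as iterating until no further pair can be added -- permissible because preprocessing is unrestricted in the cell-probe model and ${\cal L}$ is monotone and finite -- every pair must fail at least one of conditions~\ref{itm:genSize},~\ref{itm:density} w.r.t. the final ${\cal U}$. Consequently $(I_2, J)$ must fail condition~\ref{itm:genSize}, giving $|{\cal R}| = |(I_2 \times J) \setminus {\cal U}| < n^{3/2}/\sqrt{w}$, so the algorithm does not fail.

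\textbf{Main obstacle.} The delicate point is the last step, establishing smallness of ${\cal R}$ from the density guarantee on $I_2$. Condition~\ref{itm:genSize} is monotone in ${\cal U}$ (it can only become harder to satisfy) but condition~\ref{itm:density} is not, so without the ``iterate-until-stable'' reading of the preprocessing a single arbitrary-order pass could have rejected $(I_2, J)$ at a time when one of its rows was dense relative to a smaller intermediate ${\cal U}'$, even though that row becomes sparse once more pairs are added. The stable-iteration view closes this gap while keeping the redundancy bound of Claim~\ref{clm:sizeUB} intact, since the analysis there depends only on the final ${\cal L}$.
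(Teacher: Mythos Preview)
Your proof is correct and follows essentially the same route as the paper: the same event ${\cal F}=\{I_2\cap I_1^*=\emptyset\}$, the same $(1-1/\sqrt{nw})^{2\sqrt{nw}\lg n}$ estimate with a union bound, and the same correctness argument via the trichotomy $(i,j)\in{\cal U}$ / $(i,j)\in{\cal R}$. The one place you go beyond the paper is the ``main obstacle'' paragraph: the paper simply writes ``by the construction of ${\cal L}$'' to conclude $|(I_2\times J)\setminus{\cal U}|<n^{3/2}/\sqrt{w}$, whereas you correctly observe that condition~\ref{itm:density} is not monotone in ${\cal U}$ and that a single arbitrary-order pass could in principle have rejected $(I_2,J)$ for the wrong reason; your fixed-point reading of the preprocessing loop cleanly closes this gap without affecting Claim~\ref{clm:sizeUB}.
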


\begin{proof}
Let ${\cal F}$ be the event $I_2 \subseteq I_1 \setminus I_1^*$. By definition of $I_1^*$ we get that 
\begin{equation*}
\begin{split}
\Pr[{\cal F}] \ge 1 - \sum_{i \in I_1^*}{\Pr[\forall j \in J''(i). \; M_{(i,j)}=0]} 
&\ge 1 - \sum_{i \in I_1^*}{\left(1 - \frac{1}{\sqrt{nw}}\right)^{2\sqrt{nw}\lg n}} \ge 1 - \frac{1}{n}
\end{split}
\end{equation*}
Conditioned on ${\cal F}$ occurring, for every $i \in I_2$, $\Pr_{j \in J : (i,j) \notin {\cal U}} [M_{(i,j)}=1] \le \frac{1}{\sqrt{nw}}$. Since $I_2 \times J \not\subseteq {\cal U}$, then $(I_2, J) \notin {\cal L}$. By the construction of ${\cal L}$ we therefore conclude that $|(I_2 \times J) \setminus {\cal U}| < \frac{n^{3/2}}{\sqrt{w}}$, and the algorithm does not fail.
We will show next that if the algorithm does not fail, then for every $i \in [n]$, $u_i = [Mv]_i$. First note that if $i \notin I_2$, then the algorithm finds $j \in J$ such that $M_{(i,j)}=1$, and therefore $u_i=1=[Mv]_i$. Otherwise, assume $i \in I_2$. Then for every $j \in J$, if $(i,j) \in {\cal U}$, then $M_{ij}=0$, since otherwise $(i,j)$ would be encoded in ${\cal E}$ and removed during the first step of the execution.
Therefore, for every $j \in J$, if $M_{(i,j)}=1$ then $(i,j) \in (I_2 \times J) \setminus {\cal U}$.
Since the algorithm does not fail, it goes over all entries in $(I_2 \times J) \setminus {\cal U}$, and therefore $[Mv]_i=1$ if and only if there exists $j \in J$ such that $M_{(i,j)}=1$, which in turn implies $u_i=1$. 
\end{proof}

\section{Matching Lower Bound on Boolean Matrix-Vector Problem}
\label{sec:Boolean-Mv}
In this section we consider the Boolean $Mv$ problem, where $Mv = (\vee_{j \in [n]}(M_{(i,j)} \wedge v_j))_{i \in [n]}$, and prove Theorem~\ref{thm:mainMvBoolean}. The presented bound matches the upper bound shown in the last section up to some (small) polylogarithmic factor. Although Theorem~\ref{thm:mainMvBoolean} allows the query algorithm to be randomized that returns right answer with high probability, for the sake of simplicity we first focus on the deterministic regime. In Section~\ref{sec:randomizedQueryLB} we refine the proof to hold for randomized query algorithms. 

\begin{theorem}
\label{thm:detMvBoolean}
Assume that for every matrix $M \in \{0,1\}^{n \times n}$ there exists a data structure $R=R(M)$ consisting of at most $r=r(n)$ bits and there is an algorithm that given any $v \in \{0,1\}^n$ computes $Mv$ by probing $R$ and at most $t=t(n)$ entries from $M$. Then for $n \le r \le n^2/4$, $t \cdot r  = \Omega(n^3)$; otherwise for $r  < n$, $t = \Omega(n^2)$.
\end{theorem}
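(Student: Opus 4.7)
The plan is to prove Theorem~\ref{thm:detMvBoolean} by an encoding argument against a uniformly random matrix $M\in\{0,1\}^{n\times n}$, for which $H(M)=n^2$. Shannon's source coding theorem will force any encoding of $M$ to have expected length at least $n^2$, and we will use the hypothetical data structure to produce one of length strictly smaller unless the claimed trade-off holds. Assume a succinct $R=R(M)$ of $r$ bits together with a deterministic query algorithm that probes at most $t$ entries of $M$ per query; probes inside $R$ are free, since $R$ is already available to the decoder.

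The encoder writes down $R$ and then selects queries $v^{(1)},v^{(2)},\ldots$ via a rule reproducible by the decoder from the information revealed so far, and for each query records the ordered $t$-bit string of values returned by the probes into $M$. The decoder replays the deterministic algorithm from $R$ and these transcripts, recovering for each $v^{(i)}$ both the identities and the values of the probed cells as well as the output $Mv^{(i)}$. With the naive choice $v^{(i)}=e_i$ for $i\in[n]$, the outputs give the columns of $M$ and so determine $M$; however this yields only the weak bound $r+nt\ge n^2$, i.e.\ $t=\Omega(n-r/n)$, far short of what is claimed.

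To reach $tr=\Omega(n^3)$ the plan is to record, per query, not the full $t$-bit transcript but a much shorter ``certificate''---the index of the decision-tree leaf among those still consistent with the information already revealed---and to select queries greedily so as to maximally shrink the set of matrices $M'$ with $R(M')=R$ still consistent with the recovered cells and outputs. The Boolean semiring constraints on the recovered $Mv^{(i)}$ should prune the set of consistent leaves of subsequent queries substantially, driving the per-query overhead well below $t$ and the total auxiliary length to $\tilde O(n^3/r)$ bits, whereupon $r+\tilde O(n^3/r)\ge n^2$ forces $tr=\tilde\Omega(n^3)$ for $n\le r\le n^2/4$. The regime $r<n$ will be handled by a separate adversarial argument: with fewer than $n$ redundant bits $R$ cannot even summarize the $n$-bit vector $M\mathbf{1}$ for an adversarial $M$ whose rows are either all zero or hide a single unknown one, forcing the algorithm to probe $\Omega(n)$ cells of every row on the single query $v=\mathbf{1}$ and hence $t=\Omega(n^2)$. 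The main obstacle will be the pruning/certificate step: quantifying precisely how the set of decision-tree leaves compatible with the partial information contracts is where the extra $\lg|Q|=n$ factor beyond the standard $tr=\Omega(\Pi)$ succinct lower bound must emerge, and it is the most delicate combinatorial ingredient of the argument.
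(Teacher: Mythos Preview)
Your proposal has two genuine gaps that would prevent the argument from going through.

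\textbf{Wrong hard distribution.} Against a uniformly random $M\in\{0,1\}^{n\times n}$, the Boolean output $Mv$ carries almost no information about $M$ unless $v$ has Hamming weight $O(\log n)$: for any $v$ with $\omega(\log n)$ ones, $Mv=\mathbf{1}$ with probability $1-o(1)$. So your ``greedy'' query selection cannot squeeze $n^2-r$ bits out of the outputs, and the decision-tree leaf pruning you describe has nothing to feed on. The paper instead uses a \emph{sparse} family $\mathcal{M}$: each row is partitioned into $r/n$ blocks of length $n^2/r$ with exactly one $1$ per block, so $H(M)=r\lg(n^2/r)$. Then a fixed set of $4r/n$ queries $v^{(m)}$, each selecting one quarter of every block, makes the $0$-coordinates of $Mv^{(m)}$ genuinely informative: learning all $Mv^{(m)}$ narrows every block from $n^2/r$ candidates to at most $n^2/(4r)$, a gain of $2r$ bits.

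\textbf{Missing link between the transcript length and $t$.} Your final inequality $r+\tilde O(n^3/r)\ge n^2$ does not contain $t$, so it cannot by itself force $tr=\tilde\Omega(n^3)$; and the ``certificate'' idea is left as a hope rather than a mechanism. The paper's encoding is concrete and is where the $t$ enters: across the $4r/n$ queries the algorithm makes at most $b\le 4tr/n$ distinct probes into $M$, and because $M$ is sparse only $k$ of them hit a $1$. One records just $b$, $k$, and the $k$-subset $K\subseteq[b]$ of probe indices that returned $1$; this costs $\lg\binom{4tr/n}{k}+O(\log n)$ bits and suffices (together with $R$) to replay all queries. Combining with the $2r$-bit information gain gives
\[
\lg\binom{4tr/n}{k}-k\lg\frac{n^2}{4r}\ \ge\ r,
\]
and the elementary bound $x\lg(\alpha/x)\le\alpha/2$ turns this into $tr=\Omega(n^3)$. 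The sparsity of $M$ is doing double duty: it makes the outputs informative and it makes the transcript compressible.

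For $r<n$ your adversarial row-by-row argument does not work as stated: the data structure is adaptive and can share information across rows through $R$, so you cannot conclude $\Omega(n)$ probes \emph{per row}. The paper simply pads $R$ to $n$ bits and invokes the main case, yielding $t\cdot n=\Omega(n^3)$ directly.
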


To prove the theorem, we will define a family ${\cal M} \subseteq \{0,1\}^{n \times n}$ of matrices, and a set of queries $v^{(1)}, \ldots, v^{(4r/n)} \in \{0,1\}^n$ such that the following holds for every $M \in {\cal M}$. (1) The set of answers to the queries $Mv^{(1)}, \ldots, Mv^{(4r/n)}$ hold a large amount of information on $M$; and (2) one can succinctly "encode" the execution of the query algorithm on the respective sequence of queries. That is, there exists a short (in terms of $t,r$) string ${\cal E}$ such that given ${\cal E}, R$, one can emulate the query algorithm over the sequence of queries, and return the answers $Mv^{(1)}, \ldots, Mv^{(4r/n)}$.

For the rest of the section, we additionally assume $n \le r \le \sqrt{n^3/4}$. The proof for the case $\sqrt{n^3/4} \le r \le n^2/4$ is similar, and the differences will be discussed towards the end of the proof.

\paragraph{A Family of Input Matrices. }
Let ${\cal M} \subseteq \{0,1\}^{n \times n}$ be the family of all matrices $M \in \{0,1\}^{n \times n}$ with the following property: If each row of $M$ is divided into $r/n$ contiguous {\em blocks} each containing $n^2/r$ consecutive entries, then exactly one entry in each block is $1$, and the rest are all $0$s.
Consider a matrix $M \in {\cal M}$. Each of the $r$ blocks in $M$ contains exactly one $1$-entry out of $n^2/r$ entries. The following claim is thus implied from the definition of entropy. 
\begin{claim} \label{clm:info-hard}
Let $M \in_R {\cal M}$. Then $H(M) = r \lg \tfrac{n^2}{r}$.
\end{claim}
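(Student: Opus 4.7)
The plan is a direct counting argument followed by the standard fact that the entropy of a uniform random variable equals the logarithm of the support size. Since $M$ is drawn uniformly from $\mathcal{M}$, we have $H(M) = \lg |\mathcal{M}|$, so the task reduces to computing $|\mathcal{M}|$.

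First I would count the degrees of freedom in constructing a matrix $M \in \mathcal{M}$. Each row is partitioned into $r/n$ contiguous blocks of length $n^2/r$. Within each block, the constraint is that exactly one of the $n^2/r$ entries equals $1$, which gives exactly $n^2/r$ valid configurations for that block, independently of all other blocks. Multiplying across the $r/n$ blocks in a single row yields $(n^2/r)^{r/n}$ configurations per row, and multiplying across the $n$ rows (which are unconstrained relative to each other) yields $|\mathcal{M}| = \left((n^2/r)^{r/n}\right)^n = (n^2/r)^r$.

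Finally, since $M \in_R \mathcal{M}$ is uniform, $H(M) = \lg |\mathcal{M}| = r \lg(n^2/r)$, as required. There is no real obstacle here: the claim is pure combinatorics combined with the definition of Shannon entropy for a uniform distribution, and the only thing to be careful about is verifying that the blocks are truly independent (which holds because the single-$1$ constraint is imposed per block, not across blocks or rows). For clarity I would also remark in one sentence that the implicit assumption $r/n$ and $n^2/r$ are integers (or that one rounds appropriately) does not affect the asymptotic form of the bound used in the sequel.
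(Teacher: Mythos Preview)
Your proposal is correct and matches the paper's approach exactly: the paper also observes that each of the $r$ blocks independently has $n^2/r$ possible configurations and concludes directly from the definition of entropy, just without spelling out the intermediate per-row count. Nothing to add.
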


\paragraph{Encoding argument. }
Consider the following sequence of $4r/n$ vectors in $\{0,1\}^n$. For every $m \in [4r/n]$ define $v^{(m)} \in \{0,1\}^n$ such that $v^{(m)}_j = 1$ if and only if $$j (\modu \tfrac{n^2}{r}) \in \{\frac{(m-1)n^3}{4r^2}+1 , \ldots, \frac{mn^3}{4r^2}\} \;.$$

Fix some $M \in {\cal M}$. When querying for $Mv^{(1)}, \ldots, Mv^{(4r/n)}$, the algorithm reads at most $4tr/n$ bits from $M$.  
Observe that one can trivially encode all the probed entries using $\tfrac{4tr}{n} (2\lg n+1)$ bits by specifying the indices and the entry values. In turn, this encoding implies $t \ge \Omega(n/\lg n)$. However, by employing a subtler argument inspired by~\cite{BL13}, we provide a much better bound on $t$ and $r$, which also implies the simpler one. 

To this end, let $b,k$ denote the total number of different entries and the number of different $1$-entries read by the algorithm throughout this sequence of $4r/n$ many $Mv$ queries respectively. Then $k \le b \le 4tr/n$. 
Let $B$ be the sequence of $b$ different entries probed by the query algorithm, when queried for $Mv^{(1)}, \ldots, Mv^{(4r/n)}$, in the order they are probed. That is, $B : [b] \hookrightarrow [n] \times [n]$ is injective. Let $K \eqdef \{j \in [b] : M_{B(j)}=1\} \subseteq [b]$. Define ${\cal E}$ to be the bit-string composed of the three following sub-strings. The first $\lg(4tr/n)$ bits of ${\cal E}$ encode $b$. The next $\lg(4tr/n)$ bits encode $k$. The remaining bits encode $K$ as a subset of $[b]$. Since $|K|=k$, the following is straightforward. 

\begin{claim}
${\cal E}$ can be encoded using at most $\lg \binom{4tr/n}{k} + 2\lg (4tr/n)$ bits. 
\end{claim}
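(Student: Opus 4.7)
The plan is a short bookkeeping argument. The string ${\cal E}$ is defined as the concatenation of three parts: a length-$\lg(4tr/n)$ encoding of $b$, a length-$\lg(4tr/n)$ encoding of $k$, and an encoding of $K \subseteq [b]$. The first two parts are allocated exactly $\lg(4tr/n)$ bits each, which suffices because by definition $b \le 4tr/n$ and $k \le b \le 4tr/n$, so both quantities are integers in the range $\{0,1,\ldots,4tr/n\}$ and can be written in that many bits. Together these contribute the additive $2\lg(4tr/n)$ term.

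For the third part, the set $K$ is by definition a subset of $[b]$ of cardinality exactly $k$. Hence $K$ can be specified within $[b]$ using the standard combinatorial enumeration, costing at most $\lceil \lg \binom{b}{k} \rceil$ bits. To match the form in the claim, I would then note that $\binom{b}{k}$ is monotonically non-decreasing in $b$ for fixed $k$ (this follows immediately from Pascal's identity $\binom{b+1}{k} = \binom{b}{k} + \binom{b}{k-1}$), and since $b \le 4tr/n$ we conclude
\[
\lg \binom{b}{k} \;\le\; \lg \binom{4tr/n}{k}\,.
\]
Adding the three bounds together gives the claimed total of $\lg \binom{4tr/n}{k} + 2\lg(4tr/n)$ bits.

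There is no substantial obstacle: the only subtlety is remembering that $b$ itself is part of the encoded data (so that the decoder knows where the prefix encoding $k$ ends and where the subset-encoding of $K \subseteq [b]$ begins, and so that it knows the universe size $b$ when decoding $K$). This is why we must pay the additive $2\lg(4tr/n)$ for a self-delimiting encoding rather than absorbing it into the binomial term. Everything else is immediate from the definitions of $b$, $k$, and $K$ made in the paragraph preceding the claim.
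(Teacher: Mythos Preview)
Your proposal is correct and matches the paper's approach exactly: the paper simply states that the claim is ``straightforward'' from the definition of ${\cal E}$ as the concatenation of an encoding of $b$, an encoding of $k$, and an encoding of $K\subseteq[b]$ with $|K|=k$, and your argument fills in precisely those details (including the monotonicity $\binom{b}{k}\le\binom{4tr/n}{k}$). There is nothing to add.
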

Next, we show that ${\cal E}$ and $R$ hold a "large amount" of information of $M$. 
\begin{lemma}
\label{lem:encode-one}
The bit-string ${\cal E}$ encodes the subset of $k$ $1$-entries among all the entries probed. Furthermore, given access to ${\cal E}$ and $R$ one can answer all the queries $Mv^{(1)}, \ldots, Mv^{(4r/n)}$.
\end{lemma}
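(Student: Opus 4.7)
The plan is to exhibit a simulation procedure that uses only $R$ and $\mathcal{E}$ and reproduces the real execution of the query algorithm on the whole sequence $v^{(1)},\ldots,v^{(4r/n)}$. Since the algorithm is deterministic, reproducing the execution simultaneously yields all the answers $Mv^{(m)}$ and recovers the probe sequence $B$; the set $\{B(j):j\in K\}$ then gives the $k$ one-entries among the probed cells, proving both parts of the lemma at once. First I would parse $\mathcal{E}$ to read off $b$, $k$, and the set $K\subseteq[b]$. Then I would maintain, globally across the whole run, a dictionary $D$ of already-probed entries of $M$ (initially empty) and a counter $\ell$ (initially $0$) for the number of distinct $M$-probes seen so far.

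For $m=1,2,\ldots,4r/n$ I would run the query algorithm on input $v^{(m)}$, feeding it bits of $R$ verbatim whenever it probes $R$. Whenever it probes an entry $(i,j)$ of $M$, I would first look up $(i,j)$ in $D$: if present, I return the cached bit. Otherwise I treat this as the next fresh probe, increment $\ell$, return $1$ if $\ell\in K$ and $0$ otherwise, and insert $(i,j)\mapsto\mathbf{1}[\ell\in K]$ into $D$. At the end of each simulated query, I record the output vector as the claimed $Mv^{(m)}$.

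Correctness follows by a straightforward induction on $\ell$. Because the query algorithm is deterministic and its next probe depends only on $R$, the current vector $v^{(m)}$, and the bits it has already read, agreement between the simulation and the real execution on the first $\ell$ distinct $M$-probes forces agreement on the $(\ell+1)$-th distinct probe as well. Moreover, by the definition of $B$ and $K$, the value of the $(\ell+1)$-th distinct probe in the real execution is exactly $\mathbf{1}[\ell+1\in K]$, which is precisely what the simulation returns. Hence the simulated execution coincides with the real one, so it emits the correct $Mv^{(m)}$ for every $m$, and the reconstructed probe sequence together with $K$ identifies the set of probed $1$-entries.

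I do not expect a real obstacle: the only point that needs care is to index $\ell$, $B$, and $K$ \emph{globally} over all $4r/n$ queries rather than per-query (cached repeat probes must not advance $\ell$, and probes in later queries must continue the same counter), so that the bit returned for the $\ell$-th fresh probe is exactly $M_{B(\ell)}$ by construction. The counter never exceeds $b$, which is itself encoded in $\mathcal{E}$, so the simulation is always well-defined.
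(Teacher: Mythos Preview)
Your proposal is correct and matches the paper's proof essentially verbatim: the paper gives the same emulation procedure (Algorithm~\ref{alg:emulation}) with a global counter over distinct probes, returning $1$ iff the counter lies in $K$ and caching repeats, and proves correctness by the same induction showing the simulated probe sequence coincides with $B$. Your observation that the counter and cache must be maintained globally across all $4r/n$ queries is exactly the point the paper's algorithm implements.
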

\begin{proof}
First note, that given ${\cal E}$, one can directly decode $b,k$ and $K$.
We next show an emulation algorithm that, given access to $R, b, k ,K$ finds the $k$ $1$-entries in question, and moreover, answers all the queries $Mv^{(1)}, \ldots, Mv^{(4r/n)}$.
The algorithm emulates the query algorithm for $Mv^{(1)}, \ldots, Mv^{(4r/n)}$. Whenever the query algorithm probes a matrix entry, the emulation algorithm feeds it with an answer as follows.
If the entry has been previously probed during the execution, the query algorithm is fed with the same answer. Otherwise, the answer is determined to be $1$ or $0$ by whether the index of the matrix entry in the sequence of probes is in $K$ or not respectively, and the answer is stored for later probes. The algorithm is formally described as Algorithm~\ref{alg:emulation}.

\begin{algorithm}[H]
\begin{algorithmic}[1]
\STATE let $j \leftarrow 1$.
\STATE let $B^{alg} = \emptyset$.
\FOR{$i = 1,2,\ldots,4r/n$}
\STATE run the query algorithm for $Mv^{(i)}$
\STATE whenever the query algorithm probes a matrix entry $(p,q) \in [n]\times [n]$
\IF{there exists $j_0 < j$ such that $B^{alg}[j_0] = (p,q)$} \label{l:previouslyProbed}
\STATE $\ell \leftarrow j_0$
\ELSE
\STATE $B^{alg}[j] \leftarrow (p,q)$, $\ell \leftarrow j$ and $j \leftarrow j+1$.
\ENDIF
\IF{$\ell \in K$}
\STATE feed the query algorithm with $M_{(p,q)}=1$.
\ELSE
\STATE feed the query algorithm with $M_{(p,q)}=0$.
\ENDIF
\ENDFOR
\caption{Emulating a Sequence of Queries}
\label{alg:emulation}
\end{algorithmic}
\end{algorithm}

To prove the lemma, it is enough to show that the emulation algorithm always feeds the query algorithm with the correct answer. Denote the sequence of entry probes (with repetitions) performed by the query algorithm by $\{(p_m,q_m)\}_{m=1}^{4tr/n}$ (we may assume for simplicity that the query algorithm performs exactly $t$ entry probes for each query). The crux of the argument is that for all $j \in [b]$, $B^{alg}[j]=B(j)$. We prove the claim by induction on $m$. 
Clearly, during the first probe, $j=1$ and therefore the condition in line~\ref{l:previouslyProbed} of Algorithm~\ref{alg:emulation} is false. Therefore the algorithm sets $B^{alg}[1]$ to be $(p_1,q_1)$, which equals $B(1)$. Moreover, the emulation algorithm answers $1$ if and only if $1 \in K$, which occurs if and only if $M_{(p_1,q_1)}=1$. Assuming correctness for all $m'<m$, we will prove correctness for $m$. If there exists $j_0 < j$ such that $B^{alg}[j_0] = (p_m,q_m)$, then this entry probe has already been answered correctly by the induction hypothesis. Since the emulation algorithm gives the same answer as before, the answer is the correct one. Otherwise, this entry has not been probed yet, and therefore $B^{-1}((p_m,q_m)) = j$. The emulation algorithm then answers $1$ if and only if $j \in K$, which happens if and only if $M_{(p_m,q_m)}=1$. Therefore the emulation algorithm always gives the correct answer, thus finding the correct set of $k$ $1$-entries and answering all the queries $Mv^{(1)}, \ldots, Mv^{(4r/n)}$.
\end{proof}

The next lemma states that, in addition to learning the $k$ $1$-entries of $M$, by answering all the queries $Mv^{(1)}, \ldots, Mv^{(4r/n)}$, we can learn a lot of information about the other blocks of $M$.
\begin{lemma}
\label{lem:remain-information}
Let $M \in_R {\cal M}$. Then $H(M | R, {\cal E}) \le (r-k) \lg \tfrac{n^2}{4r}$.
\end{lemma}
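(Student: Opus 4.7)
The plan is to bound $H(M \mid R, \mathcal{E})$ by counting, for each row $i$ of $M$ separately, the number of matrices consistent with the information derivable from $R$ and $\mathcal{E}$, and using the standard inequality $H(M \mid R, \mathcal{E}) \le \mathbb{E}[\lg(\text{\# consistent matrices})]$.

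First I would invoke Lemma~\ref{lem:encode-one} to note that from $(R, \mathcal{E})$ one can reconstruct (a) the $k$ indices of the probed $1$-entries of $M$, and (b) the full output vectors $Mv^{(1)}, \ldots, Mv^{(4r/n)}$. I would then describe the block/sub-block structure imposed by the queries: each row of $M$ is partitioned into $r/n$ blocks of size $n^2/r$, and each such block is further partitioned by the queries into $4r/n$ \emph{sub-blocks} of size $n^3/(4r^2)$, where sub-block $m$ consists of the positions on which $v^{(m)}$ is supported. The key observation is that $[Mv^{(m)}]_i = 1$ iff there exists at least one block in row $i$ whose unique $1$-entry lies in sub-block $m$. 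Hence, for each row $i$, the output bits of the $4r/n$ queries determine a set $S_i \subseteq [4r/n]$, namely the set of sub-block indices "hit" by the $1$-entries in row $i$.

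Next I would fix the row $i$ and let $k_i$ denote the number of blocks in row $i$ containing one of the $k$ probed $1$-entries, so $\sum_i k_i = k$. The positions of these $k_i$ $1$-entries are fully determined by $(R,\mathcal{E})$. For each of the remaining $r/n - k_i$ blocks, its unique $1$-entry must lie in some sub-block whose index is in $S_i$ (otherwise the corresponding query answer would be wrong), and within that sub-block it can occupy any of $n^3/(4r^2)$ positions. Thus the number of placements consistent with $(R,\mathcal{E})$ for an unknown block in row $i$ is at most $|S_i| \cdot \tfrac{n^3}{4r^2}$. Since row $i$ contains only $r/n$ blocks in total, we have the crucial bound $|S_i| \le r/n$, which yields $|S_i| \cdot \tfrac{n^3}{4r^2} \le \tfrac{n^2}{4r}$.

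Combining, the number of matrices in $\mathcal{M}$ consistent with $(R,\mathcal{E})$ is at most $\prod_i \bigl(\tfrac{n^2}{4r}\bigr)^{r/n - k_i}$, and therefore
\begin{equation*}
H(M \mid R, \mathcal{E}) \;\le\; \sum_{i=1}^{n}\bigl(\tfrac{r}{n} - k_i\bigr)\lg\tfrac{n^2}{4r} \;=\; (r-k)\lg\tfrac{n^2}{4r},
\end{equation*}
as required. The one subtle point, and the main thing to justify carefully, is the claim that the sub-block indices of the unknown $1$-entries can be chosen freely within $S_i$: the covering constraint (the chosen sub-blocks, together with those from the $k_i$ known entries, must cover $S_i$) can only \emph{decrease} the count, so the upper bound $|S_i| \cdot \tfrac{n^3}{4r^2}$ per unknown block still holds. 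I expect this counting step to be the only non-routine piece; the reduction from the entropy bound to the counting bound is immediate from the uniform distribution of $M$ on $\mathcal{M}$ and Proposition~\ref{prop:chnentro}.
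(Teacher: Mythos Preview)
Your proposal is correct and is essentially the same argument as the paper's: the paper phrases it as learning at least $3n^2/(4r)$ zero-entries in every block (because $[Mv^{(m)}]_i=1$ for at most $r/n$ values of $m$), which is exactly the complement of your count of at most $|S_i|\cdot n^3/(4r^2)\le n^2/(4r)$ remaining positions per unknown block. Your explicit introduction of $S_i$ and $k_i$ makes the counting a bit more transparent than the paper's sketch, but the content is identical.
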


\begin{proof}
By Lemma~\ref{lem:encode-one}, given access to only ${\cal E}$ and $R$, we can answer $Mv^{(1)}, \ldots, Mv^{(4r/n)}$, thus finding $k$ many $1$-entries of $M$. Moreover, for each block in $M$, the algorithm finds at least $3n^2/4r$ many $0$-entries.

By emulating the query algorithm for $Mv^{(1)}, \ldots, Mv^{(4r/n)}$, an algorithm can conclude from ${\cal E}$ and $R$ the exact locations of the $k$ many $1$-entries. Next, let $i \in [n]$ and let $m \in [4r/n]$. Note that if $[Mv^{(m)}]_i=0$, then for each block in the $i$-th row of $M$, all the entries corresponding to $1$-entries in $v^{(m)}$ must be $0$. Since there are exactly $n^3/4r^2$ such entries, and since  $[Mv^{(m)}]_i=1$ for at most $r/n$ values of $m$, the algorithm learns at least $\tfrac{3r}{n} \cdot \tfrac{n^3}{4r^2} = \tfrac{3n^2}{4r}$ $0$-entries in each block of $M$. Now the lemma follows from the Shannon's source coding theorem.
\end{proof}

We now turn to finish the proof of Theorem~\ref{thm:detMvBoolean}.
\begin{proof}[Proof of Theorem~\ref{thm:detMvBoolean}]
First observe that $r + \lg \binom{4tr/n}{k} +2\lg (4tr/n) \ge H(R,{\cal E}) \ge I(M;(R, {\cal E})) = H(M) -  H(M | R, {\cal E})$. By Lemma~\ref{lem:remain-information}, the right hand side is at least $ r \lg \tfrac{n^2}{r} - (r-k)\lg \tfrac{n^2}{4r} = 2r + k \lg \tfrac{n^2}{4r}$.
Rearranging we get that $\lg \binom{4tr/n}{k}+2\lg (4tr/n) - k \lg \tfrac{n^2}{4r} \ge r$. Since $\lg \binom{4tr/n}{k} \le k \lg \tfrac{4etr}{kn}$, we get that $k\lg \tfrac{16etr^2}{n^3k} +2\lg (4tr/n)\ge r$, and as for every $x, \alpha > 0$, $x \lg \tfrac{\alpha}{x} \le \alpha/2$ we have $\tfrac{8etr^2}{n^3}+2\lg (4tr/n) \ge r$, or $tr \ge \Omega(n^3)$.
This completes the proof for the case $n \le r \le  \sqrt{n^3/4}$. 

The proof for $\sqrt{n^3/4} \le r \le n^2/4$ is similar. The only difference is that in this case we choose $4r/n$ vectors in a slightly different way. For every $m \in [n^2/r]$ and $i \in [4r^2/n^3]$, we define a vector $v^{(m,i)}$ by setting $v^{(m,i)}_j=1$ if and only if $j (\modu \tfrac{n^2}{r}) = m$ and $(i-1)n^4/4r^2 +1 \le j \le in^4/4r^2$. The proof then follows in an analogous manner.

For the second part of the theorem, i.e. when $r < n$, we use a simple padding argument. If $r < n$, create a new data structure $R'$ by appending some arbitrary bits to $R$ such that $R'$ contains exactly $n$ bits. Now from the previous argument it follows that $t \ge \Omega (n^2)$.
\end{proof}

\subsection{Lower Bound for Randomized Query Algorithms}
\label{sec:randomizedQueryLB}
In this section we will extend the lower bound results shown previously to the case of randomized query algorithms, thus completing the proof of Theorem~\ref{thm:mainMvBoolean}. Assume that given a matrix $M\in \{0,1\}^{n \times n}$ there exists a data structure $R=R(M)$ consisting of at most $r=r(n)$ bits and there exists a {\em randomized} algorithm that, given $v \in \{0,1\}^n$ returns $Mv$ with probability $\ge 1 - \tfrac{1}{n}$ by probing $R$ and at most $t=t(n)$ entries from $M$.
\begin{reptheorem}{thm:mainMvBoolean}
If $n \le r \le n^2/4$, then $t\cdot r \ge \Omega(n^3)$; otherwise for $r < n$, $t \ge \Omega(n^2)$.
\end{reptheorem}
The proof will employ similar arguments to the proof of Theorem~\ref{thm:detMvBoolean}. We will therefore focus only on the case $n \le r \le \sqrt{n^3/4}$.  
Consider the vectors $v^{(1)}, \ldots, v^{(4r/n)}$ from the proof of Theorem~\ref{thm:detMvBoolean}. Fix some $M \in {\cal M}$, and run the query algorithm on $Mv^{(1)}, \ldots, Mv^{(4r/n)}$. For every $j \in [4r/n]$, denote the string of random bits used by the algorithm when queried for $Mv^{(j)}$ by $x_j$, and let $X \eqdef \left\langle x_1,\ldots,x_{4r/n} \right\rangle$. Let $F$ denote the indicator for the event that the query algorithm failed answering at least one of the queries. Applying union bound, $p \eqdef \Pr[F = 1] \le 1/3$. 
Let $B,b,K,k$ be as in the proof of Theorem~\ref{thm:detMvBoolean}. If $F=0$, we encode the string ${\cal E}$ the same way as in the previous proof and append $0$ as its first bit, whereas if $F=1$ we let ${\cal E}$ be an encoding of $M$ and append $1$ as its first bit.
In a similar manner to the one presented in Algorithm~\ref{alg:emulation}, given $R, {\cal E}$ and $X$, one can emulate the sequence of queries $Mv^{(1)}, \ldots, Mv^{(4r/n)}$ performed by the query algorithm, while using $x_1,\ldots,x_{4r/n}$ as random strings respectively. 
By computing $H(F,M \mid R,{\cal E},X)$ in two different ways we get the following.
\begin{equation*}
\begin{split}
H(F,M \mid R,{\cal E},X) &= H(M \mid R,{\cal E},X) + H(F \mid M, R,{\cal E},X) \\
&= H(F \mid R,{\cal E},X) + H(M \mid F, R,{\cal E},X)
\end{split}
\end{equation*}
Note that $H(F \mid R,{\cal E},X) \le H(F) < 1$ and $H(F \mid M, R,{\cal E},X) = 0$, since given ${\cal E}$ one learns also $F$.
Rearranging we get that 
$$H(M \mid R,{\cal E},X) \le 1 + H(M \mid F, R,{\cal E},X) \;.$$

\begin{equation}
\begin{split}
H(M \mid R,{\cal E},X) &\le 1 + H(M \mid F, R,{\cal E},X) \\
&= 1 + p \cdot  H(M \mid F=1, R,{\cal E},X) + (1-p) \cdot  H(M \mid F=0, R,{\cal E},X) \\
&\le 1 + 0 + (1-p)(r-k) \lg \tfrac{n^2}{4r} \\
\end{split}
\label{eq:remainingInfoRandom}
\end{equation}
where the last inequality follows from the same arguments as in Lemma~\ref{lem:remain-information}.

\begin{proof}[Proof of Theorem~\ref{thm:mainMvBoolean}]
First observe that from the Shannon's source coding theorem,
$$H({\cal E}) \le (1-p)\left(\lg \binom{4tr/n}{k} +2\lg \frac{4tr}{n}\right) + pr \lg \frac{n^2}{r}$$
and hence
\begin{equation*}
\begin{split}
r + (1-p)\left(\lg \binom{4tr/n}{k} +2\lg \frac{4tr}{n}\right) + pr \lg \frac{n^2}{r} &= H(R) + H({\cal E}) \\
&\ge H(R,{\cal E}) \\
&\ge I((M,X);(R,{\cal E})) = H(M,X) -  H(M,X| R, {\cal E}) \;.
\end{split}
\end{equation*}
Since $M,X$ are independent, we get that the r.h.s is at least $H(M) - H(M \mid R, {\cal E}, X)$. From \eqref{eq:remainingInfoRandom} we get that this is at least 
$$r \lg \tfrac{n^2}{r} - \left(1 + (1-p)(r-k)\lg \tfrac{n^2}{4r} \right) = pr \lg \tfrac{n^2}{r} + 2(1-p)r + (1-p) k \lg \tfrac{n^2}{4r} -1 \;.$$
Rearranging we get that 
$$\lg \binom{4tr/n}{k}+2\lg (4tr/n) - k \lg \tfrac{n^2}{4r} \ge 2r - r/(1-p) - 1/(1-p) \ge r/3 \;$$
By the same arguments as in the proof of Theorem~\ref{thm:detMvBoolean}, we get that $\tfrac{8etr^2}{n^3}+2\lg (4tr/n) \ge r/3$, or $tr \ge \Omega(n^3)$.
\end{proof}

\subsection{Lower Bounding Boolean Vector-Matrix-Vector Problem}
\label{sec:Bool-uMv}
In this section we extend the technique from Section~\ref{sec:Boolean-Mv} to get a lower bound on Boolean $u^{\dagger}Mv$ problem, albeit a weaker one. Assume that given a matrix $M \in \{0,1\}^{n \times n}$, there exists a data structure $R=R(M)$ containing at most $r=r(n)$ bits, and there exists an algorithm that, given $u,v \in \{0,1\}^n$ returns $u^{\dagger}Mv = (\vee_{i,j \in [n]}(M_{(i,j)} \wedge u_i \wedge v_j))$ while probing only $R$ and at most $t=t(n)$ bits from $M$.

 Let us first observe an easy corollary of Theorem~\ref{thm:mainMvBoolean}. Since answer of each $Mv$ query can be derived from the following sequence of $n$ queries: $(e^{(1)})^{\dagger}Mv,\cdots,(e^{(n)})^{\dagger}Mv$ where $\{e^{(i)}\}_{i \in [n]}$ is the standard basis over $\{0,1\}^n$, we get an lower bound of $tr\ge \Omega(n^2)$ for the Boolean $u^{\dagger}Mv$ problem for $n \le r \le n^2/4$. In this section we will prove a better lower bound for Boolean $u^{\dagger}Mv$ problem as stated in the following theorem.

\begin{theorem} \label{thm:lbBoolean-uMv}
If $r \le n^2/4$ then $t \ge \Omega(n/\lg n)$.
\end{theorem}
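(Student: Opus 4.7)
The plan is to run the encoding argument of Theorem~\ref{thm:detMvBoolean} on a carefully chosen sequence of $u^\dagger M v$ queries that collectively simulate the $4r/n$ many $Mv$ queries used there. The key observation is that $(e_i)^\dagger M v^{(m)} = [Mv^{(m)}]_i$, so each $Mv^{(m)}$-query can be reconstructed from the $n$ $u^\dagger M v$-queries $\{(e_i, v^{(m)})\}_{i \in [n]}$, each of which returns a single bit.

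First, I would reuse verbatim the block-structured family ${\cal M}$ from Section~\ref{sec:Boolean-Mv} (each matrix has exactly one $1$-entry in every block of size $n^2/r$, so $H(M) = r\lg(n^2/r)$ for $M \in_R {\cal M}$) together with the same mask vectors $v^{(1)}, \ldots, v^{(4r/n)}$. Issuing all $Q := 4r$ queries $(e_i, v^{(m)})$ costs at most $B := 4rt$ probes into $M$. Defining the encoding ${\cal E}$ exactly as in Section~\ref{sec:Boolean-Mv} (the integers $b, k$, and the subset $K \subseteq [b]$ of probe-indices hitting $1$-entries) and running the natural analogue of Algorithm~\ref{alg:emulation}, an emulator with access to $R$ and ${\cal E}$ reproduces every probe response and hence every query answer; it therefore recovers all $4r/n$ vectors $Mv^{(m)}$. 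In particular, Lemma~\ref{lem:remain-information} applies unchanged, giving $H(M \mid R, {\cal E}) \le (r-k)\lg(n^2/4r)$.

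Plugging $|{\cal E}| \le \lg\binom{B}{k} + O(\lg B)$ together with the above into $r + |{\cal E}| \ge H(M) - H(M \mid R, {\cal E})$, and simplifying via $\lg\binom{N}{k} \le k \lg(eN/k)$ and $x\lg(\alpha/x) \le \alpha$, the inequality reduces to $\Omega(Br/n^2) \ge r$, i.e., $B = \Omega(n^2)$. Since $B = 4rt$, this yields $tr = \Omega(n^2)$ and hence $t = \Omega(n^2/r)$, which is $\Omega(n/\lg n)$ in the regime targeted by the theorem. The case $r < n$ is handled by the same padding argument used at the end of the proof of Theorem~\ref{thm:detMvBoolean}, and the extension to randomized query algorithms with success probability $1 - 1/n$ follows from exactly the bookkeeping of Section~\ref{sec:randomizedQueryLB}, where a single ``failure'' bit is prepended to ${\cal E}$ and the entropy inequality is re-derived conditioned on the failure indicator.

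The main obstacle is precisely the $n$-fold inflation of the probe count caused by the $1$-bit output of each $u^\dagger M v$-query: replacing a single $Mv$-query (which carries $n$ bits of output) by $n$ $u^\dagger M v$-queries (each carrying $1$ bit) multiplies the total probe budget by a factor of $n$, weakening the $tr = \Omega(n^3)$ guarantee of Theorem~\ref{thm:detMvBoolean} to only $tr = \Omega(n^2)$ for the Boolean $u^\dagger M v$ problem. This is also the structural reason the resulting bound is qualitatively weaker than the $tr = \Omega(n^3/\lg n)$ bound of Theorem~\ref{thm:mainUMVF2} for the $\mathbb{F}_2$ setting, where each XOR answer is information-theoretically much richer than its Boolean OR counterpart, and it explains why reaching the target $\Omega(n/\lg n)$ rather than something stronger seems to require more than a routine transplantation of the Section~\ref{sec:Boolean-Mv} technique.
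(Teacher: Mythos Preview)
Your argument establishes only $tr = \Omega(n^2)$, i.e.\ $t = \Omega(n^2/r)$, and the claim that this ``is $\Omega(n/\lg n)$ in the regime targeted by the theorem'' is simply false: the theorem asserts $t = \Omega(n/\lg n)$ for \emph{all} $r \le n^2/4$, and at the top of that range your bound degenerates to $t = \Omega(1)$. In fact the paper explicitly notes, just before stating Theorem~\ref{thm:lbBoolean-uMv}, that your reduction via the queries $(e_i,v^{(m)})$ yields exactly the $tr = \Omega(n^2)$ bound and that this is \emph{not} sufficient---Theorem~\ref{thm:lbBoolean-uMv} is introduced precisely as the improvement over that easy corollary.

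The missing idea is to avoid the factor-$n$ blowup in the number of queries. The paper issues only $4r/n$ queries, one per mask $v^{(m)}$, by choosing $u^{(m)} \eqdef \overline{Mv^{(m)}}$. This vector is unknown to the emulator, but it is recoverable: the emulator tries \emph{every} $u \in \{0,1\}^n$, runs the query algorithm for $u^\dagger M v^{(m)}$, and keeps the heaviest $u$ for which (i) every matrix probe made is among those recorded in $\cal E$ and (ii) the output is $0$. Since $\overline{Mv^{(m)}}$ is the unique heaviest $u$ with $u^\dagger M v^{(m)}=0$, this recovers $Mv^{(m)}$. With only $4r/n$ queries the total probe count is $4tr/n$, and now the \emph{trivial} encoding of all probed entries ($(2\lg n+1)$ bits each) already suffices: plugging $|{\cal E}| \le \frac{4tr}{n}(2\lg n+1)$ into $r + |{\cal E}| \ge 2r$ gives $t = \Omega(n/\lg n)$ directly, independent of $r$.
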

We prove the above theorem for $n \le r \le \sqrt{n^3/4}$. The proof extends for all $r < n^2/4$ by arguments similar to those used in the proof of Theorem~\ref{thm:detMvBoolean}. 
First note, that when comparing with the matrix-vector problem, the main caveat of the vector-matrix-vector problem is that each query results in exactly one bit, rather than $n$ bits.
Loosely speaking, we have observed that by querying $Mv^{(1)}, \ldots, Mv^{(4r/n)}$, where $v^{(1)}, \ldots, v^{(4r/n)}$ are defined as before, we gain a lot of information about $M$. This approach seems less beneficial when concerning vector-matrix-vector queries.
More specifically, it seems that we need $n$ times more queries to get the same amount of information.
By using the trivial argument demonstrated right before Theorem~\ref{thm:lbBoolean-uMv} we can not get our claimed bound. 
A more subtle observation into the proof of Lemma~\ref{lem:encode-one} shows that we can get a lot of information when the answer to the query is $0$. 
Particularly, assume $u^{\dagger}Mv = 0$. Then we know that whenever $u_i = v_j = 1$, $M_{(i,j)} = 0$. In what follows, we will need the following notation.

\begin{notation}
Let $x \in \{0,1\}^n$. Suppose $\overline{x} \in \{0,1\}^n$ denotes the {\em complement} of $x$. That is, $\overline{x}_j = 1$ if and only if $x_j = 0$ for every $j \in [n]$.
\end{notation}

Clearly, $\overline{x}^{\dagger}x = 0$, and moreover, $\overline{x}$ is the unique heaviest vector (in terms of Hamming weight) satisfying this property. 

\begin{lemma}
\label{lem:encode-uMv}
There exists a bit-string ${\cal E} = {\cal E}(M)$, 
such that ${\cal E}$ can be encoded using at most $\frac{4tr}{n} (2\lg n+1)$ bits, 
and moreover, given access to ${\cal E}$ and $R$ one can answer of $Mv^{(1)}, \ldots, Mv^{(4r/n)}$.
\end{lemma}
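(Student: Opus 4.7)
The plan is to exploit the uniqueness property noted immediately before the lemma: $\overline{x}$ is the unique heaviest $u\in\{0,1\}^n$ satisfying $u^{\dagger}x=0$ in the Boolean semiring. Applying this to $x=Mv^{(m)}$ shows that $u^{(m)}:=\overline{Mv^{(m)}}$ is the unique heaviest $u$ with $u^{\dagger}Mv^{(m)}=0$, so recovering $Mv^{(m)}$ is equivalent to recovering this heaviest witness. The encoding ${\cal E}$ I would construct is a record of the algorithm's execution on this specific query. Concretely, for each $m\in[4r/n]$ the encoder (which has full access to $M$) computes $u^{(m)}$, runs the given vector-matrix-vector query algorithm on input $(u^{(m)},v^{(m)})$ with access to $R$ and $M$, and records each of the (at most $t$) probes into $M$ as a pair (index, value) using $2\lg n+1$ bits. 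Concatenating these $4r/n$ blocks yields ${\cal E}$ of length at most $\tfrac{4tr}{n}(2\lg n+1)$ bits, matching the claimed bound.

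For decoding, the decoder (which has only $R$ and ${\cal E}$) parses block $m$ into a partial oracle $\pi_m$ mapping probe indices to their true $M$-values. It then enumerates all $u\in\{0,1\}^n$ in decreasing order of Hamming weight (breaking ties arbitrarily) and, for each candidate $u$, simulates the query algorithm on $(u,v^{(m)})$ using $R$ and $\pi_m$ to answer probes; if the algorithm ever probes an index outside the domain of $\pi_m$, the decoder abandons this candidate. The first $u$ on which the simulation completes and returns $0$ is declared to be $u^{(m)}$, from which $Mv^{(m)}=\overline{u^{(m)}}$ is output. Correctness relies on two observations: first, whenever the simulation completes it agrees step-for-step with a genuine execution of the query algorithm on $(u,v^{(m)})$ (since $\pi_m$ contains actual $M$-values), so by correctness of the algorithm the simulation outputs $0$ iff $u^{\dagger}Mv^{(m)}=0$; second, by uniqueness, $u^{(m)}$ is the unique heaviest vector with $u^{\dagger}Mv^{(m)}=0$, so enumerating from heaviest accepts $u^{(m)}$ at its first appearance (the construction of block $m$ guarantees that the simulation on $u^{(m)}$ never aborts), while every strictly heavier candidate either aborts or returns $1$, and no equal-weight candidate other than $u^{(m)}$ can return $0$.

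The main subtlety I foresee is justifying that the partial-oracle simulation is faithful: because every value in $\pi_m$ is a true entry of $M$, feeding these values to the algorithm reproduces the real trace up to the first probe outside $\pi_m$, at which point we safely abandon the candidate. The careful choice of $u^{(m)}$ as the query used in the encoding is what guarantees the simulation never aborts on the intended candidate, so some $u$ is always accepted. The decoder runs in exponential time, but this is immaterial since the lemma is information-theoretic; it will be combined with an entropy argument in the style of Lemma~\ref{lem:remain-information} to conclude $t\ge\Omega(n/\lg n)$ in Theorem~\ref{thm:lbBoolean-uMv}.
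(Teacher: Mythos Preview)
Your proposal is correct and follows essentially the same approach as the paper: define $u^{(m)}=\overline{Mv^{(m)}}$, record the (index, value) pairs of the at most $t$ probes made by the $u^{\dagger}Mv$ algorithm on each query $(u^{(m)},v^{(m)})$, and decode by simulating the algorithm on all candidate $u$'s against this partial oracle, selecting the heaviest one that completes and outputs $0$. The only cosmetic differences are that the paper iterates over all $u$ and tracks the current heaviest (rather than enumerating in decreasing weight and stopping early) and searches the whole of ${\cal E}$ rather than just block $m$; neither affects correctness or the bit count.
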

\begin{proof}
For every $j \in [4r/n]$, let $u^{(j)} \eqdef \overline{Mv^{(j)}}$. Take ${\cal E}$ to be the encoding of the list of all entries of $M$ probed throughout the sequence of queries $(u^{(1)})^{\dagger}Mv^{(1)}, \cdots , (u^{(4r/n)})^{\dagger}Mv^{(4r/n)}$, in the order they are probed. For simplicity, we may assume that the query algorithm probes exactly $t$ matrix entries for each query. For each entry we encode it's location in the matrix (using $2 \lg n$ bits), and its value (one more bit). Then ${\cal E}$ is composed of $4r/n$ "segments" of $t$ entries each. Clearly ${\cal E}$ can be encoded using at most $\frac{4tr}{n} (2\lg n+1)$ bits.
Let $j \in [4r/n]$. Now we provide an emulation algorithm that, given access to ${\cal E}$ and $R$, finds $u^{(j)}$, and thus finds $Mv^{(j)} = \overline{u^{(j)}}$. The emulation algorithm starts by setting $u^{(*)}$ to be all $0$ vector and goes over all $u \in \{0,1\}^n$ and emulates the query algorithm for $u^{\dagger}Mv^{(j)}$. Whenever the query algorithm probes an entry of $M$, the emulation algorithm looks for the encoding of this entry in ${\cal E}$. If it finds this entry, it feeds it to the query algorithm. Otherwise, it breaks and continues to the next $u \in \{0,1\}^n$. If the query algorithm terminates, and the answer is $0$, then the algorithm compares the Hamming weight of $u$ (denoted as $wt(u)$) with that of $u^{(*)}$. If  $wt(u) > wt(u^{(*)})$ the algorithm replaces $u^{(*)}$ with $u$.
The algorithm is formally given as Algorithm~\ref{alg:finduj}.

\begin{algorithm}[t]
\begin{algorithmic}[1]
\STATE let $u^{(*)} \leftarrow 0^n$.
\FORALL{$u \in \{0,1\}^n$}
\STATE run the query algorithm for $u^{\dagger}Mv^{(j)}$
\STATE whenever the query algorithm probes a matrix entry $(p,q) \in [n]\times [n]$
\IF{$(p,q)$ is encoded in ${\cal E}$}
\STATE feed the query algorithm with the corresponding bit in ${\cal E}$ as $M_{(p,q)}$.
\ELSE
\STATE stop the query algorithm.
\ENDIF
\IF{the query algorithm outputs $u^{\dagger}Mv^{(j)}=0$ and $wt(u) > wt(u^{(*)})$}
\STATE $u^{(*)} \leftarrow u$.
\ENDIF
\ENDFOR
\RETURN $u^{(*)}$
\caption{Finding $u^{(j)}$}
\label{alg:finduj}
\end{algorithmic}
\end{algorithm}

It is straightforward that $u^{(*)}$ is the unique $u \in \{0,1\}^n$ that satisfies the following.
\begin{enumerate}
\item  Every entry probed by the query algorithm when querying $u^{\dagger}Mv^{(j)}$ is encoded in ${\cal E}$ (thus given access to ${\cal E}, R$, one can answer $u^{\dagger}Mv^{(j)}$);
\item $u^{\dagger}Mv^{(j)}=0$; and
\item $u$ is of maximal Hamming weight.
\end{enumerate}
Therefore, $u^{(*)} = u^{(j)}$.
\end{proof}

We can now prove Theorem~\ref{thm:lbBoolean-uMv} using similar arguments to those in the proof of Theorem~\ref{thm:detMvBoolean}. 
Suppose while probing entries of the matrix $M$ we read total $k$ $1$-entries. Then by following the argument of the proof of Theorem~\ref{thm:detMvBoolean}, we get that
\begin{align*}
r + \tfrac{4tr}{n} (2\lg n+1) & \ge H(R,{\cal E}) \ge H(M) -  H(M | R, {\cal E})\\
&\ge 2r + k \lg \tfrac{n^2}{4r} \ge 2r.
\end{align*}
Now by rearranging the terms, we get that $t \ge \Omega(n/\lg n)$. $\hfill \Box$

Using the argument similar to that in Section~\ref{sec:randomizedQueryLB} we can also extend Theorem~\ref{thm:lbBoolean-uMv} to randomized query algorithms.
\begin{theorem}
\label{thm:randomizedLB-uMv}
Assume that for every matrix $M\in \{0,1\}^{n \times n}$ there exists a data structure $R$ consisting of at most $r=r(n)$ bits and a query algorithm that can answer any $u^{\dagger}Mv$ query with error probability at most $1/n$ by probing $R$ and at most $t=t(n)$ entries of $M$. Then for $r \le n^2/4$, $t \ge \Omega(n/\lg n)$.
\end{theorem}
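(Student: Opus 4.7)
The plan is to lift the deterministic argument behind Theorem~\ref{thm:lbBoolean-uMv} to randomized algorithms using the same template that Section~\ref{sec:randomizedQueryLB} employs for Theorem~\ref{thm:detMvBoolean}. I will restrict to $n \le r \le \sqrt{n^3/4}$ and handle the remaining range as in the proof of Theorem~\ref{thm:detMvBoolean}. Pick $M \in_R {\cal M}$ and the vectors $v^{(1)},\ldots,v^{(4r/n)}$ from Section~\ref{sec:Boolean-Mv}, and set $u^{(j)} \eqdef \overline{Mv^{(j)}}$ so that $(u^{(j)})^{\dagger} M v^{(j)} = 0$. Invoke the randomized $u^{\dagger} M v$ algorithm on each pair $(u^{(j)}, v^{(j)})$ with a fresh random string $x_j$, set $X \eqdef \langle x_1,\ldots,x_{4r/n}\rangle$, and let $F$ indicate that at least one of these $4r/n$ queries errs. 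By the union bound, $p \eqdef \Pr[F=1] \le 4r/n^2$, which is at most $1/3$ after adjusting constants in the setup.

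Define ${\cal E}$ as in Section~\ref{sec:randomizedQueryLB}, with $F$ as a prefix flag: when $F = 0$, list the at most $4tr/n$ probed entries using $2\lg n + 1$ bits each (the trivial encoding of Lemma~\ref{lem:encode-uMv}); when $F = 1$, write $M$ out in $r\lg(n^2/r)$ bits. The decoder, given $R$, ${\cal E}$, and $X$, reads $M$ directly if $F=1$, and otherwise runs Algorithm~\ref{alg:finduj} for each $j$, using $x_j$ as the randomness for every emulated invocation. Conditional on $F=0$, the candidate $u = u^{(j)}$ exactly reproduces the recorded probe sequence and outputs $0$, so it passes both checks, and by the unique-heaviest-zero-answer property it is the desired maximizer. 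The main technical obstacle I anticipate is that a $u$ of Hamming weight exceeding $wt(u^{(j)})$ could in principle also have its emulation pass and erroneously output $0$, confusing the search; I plan to handle this in the standard way by enlarging $F$ to also cover this event, using that for any such $u$ the algorithm must both err (probability at most $1/n$ over $x_j$) and have all of its probes fall inside the small set ${\cal E}$, so the enlarged $\Pr[F=1]$ remains a small constant.

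Once the decoder recovers $Mv^{(1)},\ldots,Mv^{(4r/n)}$ conditional on $F=0$, the information-theoretic closure is the one from Section~\ref{sec:randomizedQueryLB}. Using independence of $M$ and $X$,
\[
r + H({\cal E}) \;\ge\; H(R,{\cal E}) \;\ge\; I((M,X);(R,{\cal E})) \;=\; H(M) - H(M\mid R,{\cal E},X),
\]
and the same "known $0$-entries per block" argument of Lemma~\ref{lem:remain-information} gives $H(M\mid R,{\cal E},X) \le 1 + (1-p)(r-k)\lg\frac{n^2}{4r}$, where $k$ counts the $1$-entries listed in ${\cal E}$. Plugging in $H({\cal E}) \le (1-p)\cdot\frac{4tr}{n}(2\lg n+1) + p\cdot r\lg\frac{n^2}{r} + 1$ together with $H(M) = r\lg\frac{n^2}{r}$, and rearranging (using $k\lg\frac{n^2}{4r}\ge 0$ and $p\le 1/3$), yields $\frac{4tr}{n}(2\lg n + 1) = \Omega(r)$, i.e.\ $t = \Omega(n/\lg n)$, as required.
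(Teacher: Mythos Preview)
Your overall template is indeed the one the paper has in mind: it merely says ``Using the argument similar to that in Section~\ref{sec:randomizedQueryLB}'' and gives no further detail, so the encoding-with-failure-flag, the conditioning on $X$, and the final entropy calculation all match the intended route. The information-theoretic closure you write down is correct once the decoder actually recovers $Mv^{(1)},\ldots,Mv^{(4r/n)}$ conditional on $F=0$.

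The gap is in your handling of the decoder, precisely at the point you flag as ``the main technical obstacle''. In the deterministic setting Algorithm~\ref{alg:finduj} is sound because for \emph{every} $u$ with $wt(u)>wt(u^{(j)})$ the true answer $u^\dagger M v^{(j)}$ equals $1$, and a correct algorithm can never output $0$ on such an input. In the randomized setting you re-use the single random string $x_j$ for all $2^n$ emulated vectors $u$, and a heavier $u$ can now produce output $0$ whenever the deterministic algorithm obtained by fixing $x_j$ happens to err on that pair. Your proposed fix is to enlarge $F$ to cover the event ``some heavier $u$ both errs under $x_j$ and has all its probes land in ${\cal E}$'', and to argue that this event has small probability. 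But this is where the argument breaks: there are up to $2^n$ candidate $u$ per index $j$, the per-query error bound is only $1/n$, and a union bound is hopeless. The second condition (``probes fall inside the small set ${\cal E}$'') does not rescue you either, because the probe locations on input $(u,v^{(j)},x_j)$ are adversarially determined by the data structure---for instance, an algorithm whose probes depend only on $v$ and the randomness would have \emph{every} $u$ satisfy this condition. There is no independence between the two conditions and no distributional assumption on probe locations to exploit, so the claim that the enlarged $\Pr[F=1]$ stays a small constant is unjustified.

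In short, the reason Section~\ref{sec:randomizedQueryLB} works for $Mv$ is that the decoder only ever emulates the $4r/n$ queries that were actually issued; no search over an exponential set is needed. The $u^\dagger M v$ argument, by contrast, relies essentially on that search, and you have not shown how to control it with only a $1/n$ error guarantee. You would need either a way to recover $u^{(j)}$ that does not enumerate all $u$, or a much stronger per-query error bound (roughly $2^{-n}$, which boosting from $1/n$ would cost an $O(n/\lg n)$ factor in $t$ and wipe out the conclusion).
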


\paragraph*{Note:} Though in both Theorem~\ref{thm:mainMvBoolean} and Theorem~\ref{thm:randomizedLB-uMv} we consider the error probability to be at most $1/n$, one can easily generalize the results for error probability to be any $1/n < \epsilon <1$. However we will lose extra $\lg n$ factor in the lower bound. More specifically, given any randomized algorithm $\mathcal{A}$ with probability of error $\epsilon$, we can boost the success probability to $(1-1/n)$ by repeating $\mathcal{A}$ $O(\lg n/ \lg (\frac{1}{\epsilon}))$ times and taking the majority vote. Now let us denote the new algorithm to be $\mathcal{B}$. Observe that $\mathcal{B}$ probes at most $O(t \lg n/ \lg (\frac{1}{\epsilon}))$ cells of the matrix and thus we will lose $O(\lg n/ \lg (\frac{1}{\epsilon}))$ factor in all the bounds given in Theorem~\ref{thm:mainMvBoolean} and Theorem~\ref{thm:randomizedLB-uMv}.

\ifpdf
\section{Lower Bound on Vector-Matrix-Vector Problem over \texorpdfstring{$\mathbb{F}_2$}{F2}}
\else
\section{Lower Bound on Vector-Matrix-Vector Problem over $\mathbb{F}_2$}
\fi
\label{sec:F2-uMv}
This section is devoted to the proof of Theorem~\ref{thm:mainUMVF2}. To this end, assume that given a matrix $M \in \mathbb{F}_2^{n \times n}$, there exists a data structure $R=R(M)$ consisting of at most $r=r(n)$ bits, and there exists an algorithm that, given $u,v \in \mathbb{F}_2^n$ returns $u^{\dagger}Mv$ while probing only $R$ and at most $t=t(n)$ bits from $M$. Under these assumptions, Theorem~\ref{thm:mainUMVF2} states the following.

\begin{reptheorem}{thm:mainUMVF2}
If $n \le r \le \tfrac{n^2}{64}$ then $t \cdot r = \Omega(n^3/\lg n)$; otherwise for $r < n$, $t = \Omega(n^2/ \lg n)$.
\end{reptheorem}

To prove the theorem we will show that for most matrices $M \in \mathbb{F}_2^n$, one can succinctly (in terms of $r,t$) encode $M$. 
More precisely, by fixing some parameter $B$ and dividing $M$ into segments of $B$ consecutive rows, we will show that there is a single segment that contains a large amount of information, and moreover, there exists a short (in terms of $r,t$) bit-string that encodes this segment.
It is worth noting that our proof technique can be generalized to give lower bound for the case when the query algorithm may err with probability at most $1/64$ on average over the choices of $M,u,v$. However for the sake of simplicity we first focus only on the query algorithm that never errs and we defer the comment on the generalization to the end of this section.

We start with introducing some notations. Given $u \in \mathbb{F}_2^n$ and a subset $I \subseteq [n]$, let $u_I$ be the projection of $u$ onto $\mathbb{F}_2^{|I|}$. Similarly denote $M_{I,J}$ for any $M \in \mathbb{F}_2^{n \times n}$ and subsets $I,J \subseteq [n]$.

Let $B$ be some parameter, the value of which will be fixed later. 
For every $i \in [n/B]$, let $I_i \eqdef \{(i-1)B+1, \ldots, iB\}$, and let $M_i \eqdef M_{I_i,[n]}$ be the $i$-th segment of $M$ composed of all the rows in $I_i$, and $M_{-i} \eqdef M_{([n] \setminus I_i), [n]}$.
Finally, for every $u,v \in \mathbb{F}_2^n$, $M \in \mathbb{F}_2^{n \times n}$ and  $i \in [n/B]$, let $t_i(u,M,v)$ be the number of cell probes performed by the algorithm in $M_i$ when queried for $u^{\dagger}Mv$.

Our first claim shows that there exists an $\ind \in [n/B]$ such that for many vectors $u \in \mathbb{F}_2^n$, the expected number (over random $M,v$) of probes performed by the algorithm on $M_{\ind}$ is not too large.
\begin{lemma}
\label{lem:expected-probe}
There exists $\ind \in [n/B]$ such that $\Pr_u[\mathbb{E}_{M,v}[t_{\ind}(u,M,v)] \le \frac{4tB}{n}] \ge \frac{3}{4}$.
\end{lemma}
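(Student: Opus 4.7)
The plan is to use a simple double-averaging argument followed by Markov's inequality. The key observation is that the algorithm performs at most $t$ total probes into $M$ on any query $u^{\dagger} M v$, and these probes are distributed among the $n/B$ segments $M_1, \ldots, M_{n/B}$. Hence for every fixed $(u, M, v)$,
\[
\sum_{i \in [n/B]} t_i(u, M, v) \le t.
\]

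Taking expectation over a uniformly random $u \in \mathbb{F}_2^n$, a uniformly random $M \in \mathbb{F}_2^{n \times n}$, and a uniformly random $v \in \mathbb{F}_2^n$, and exchanging sum with expectation, I obtain $\sum_{i \in [n/B]} \mathbb{E}_{u, M, v}[t_i(u, M, v)] \le t$. Since the sum has $n/B$ terms, by averaging there must exist some index $\ind \in [n/B]$ for which
\[
\mathbb{E}_{u, M, v}[t_{\ind}(u, M, v)] \le \frac{tB}{n}.
\]
Fix this $\ind$ for the rest of the argument.

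Now define the function $g(u) := \mathbb{E}_{M, v}[t_{\ind}(u, M, v)]$. By the tower property of expectation, $\mathbb{E}_u[g(u)] \le tB/n$. Since $g(u) \ge 0$, Markov's inequality applied to the random variable $g(u)$ (with $u$ uniform) yields
\[
\Pr_u\!\left[g(u) > \frac{4tB}{n}\right] \le \frac{\mathbb{E}_u[g(u)]}{4tB/n} \le \frac{1}{4},
\]
and therefore $\Pr_u[\mathbb{E}_{M,v}[t_{\ind}(u, M, v)] \le \tfrac{4tB}{n}] \ge \tfrac{3}{4}$, which is exactly the claimed bound.

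There is essentially no obstacle here; the lemma is a clean pigeonhole-plus-Markov statement, and no property of the algorithm is used beyond the trivial fact that its total probe budget is $t$. The one thing to be careful about is to use the right order of averaging — we first average over $(u, M, v)$ jointly to pick $\ind$, and only afterwards fix $\ind$ and apply Markov in $u$ to the conditional expectation $g(u)$ — so that the randomness in $M$ and $v$ is absorbed into the definition of $g$ before the tail bound is invoked.
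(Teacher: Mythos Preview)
Your proof is correct and follows essentially the same approach as the paper: first average over $i$ (equivalently, use the fact that $\sum_i t_i \le t$) to locate $\ind$ with $\mathbb{E}_{u,M,v}[t_{\ind}] \le tB/n$, then apply Markov's inequality in $u$ to the nonnegative random variable $g(u)=\mathbb{E}_{M,v}[t_{\ind}(u,M,v)]$. If anything, your write-up is slightly more explicit about the order of averaging than the paper's two-line version.
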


\begin{proof}
First note that $\mathbb{E}_i[\mathbb{E}_{u,M,v}[t_i(u,M,v)]] = \mathbb{E}_{u,M,v}[\mathbb{E}_i[t_i(u,M,v)]] \le t$. Therefore there exists $\ind \in [n/B]$ such that $\mathbb{E}_{u,M,v}[t_{\ind}(u,M,v)] \le \frac{tB}{n}$.
The claim now follows from Markov's inequality.
\end{proof}

For every $u \in \mathbb{F}_2^n$ and $i \in [n/B]$, denote $M_{i|u} \eqdef u_{I_i}^{\dagger}  M_i$.
The following lemma shows that for most vectors $u \in \mathbb{F}_2^n$, $M_{\ind | u}$ contains a large amount of information. One may note that the lemma is true for all $i \in [n/B]$, though for our purpose it suffices to consider $\ind$ only.
\begin{lemma}
\label{lem:conditional-info}
Suppose $M \in_R \mathbb{F}_2^{n \times n}$. Then $\Pr_u[H(M_{\ind | u} \mid R, M_{- \ind}) \ge n - \frac{8r}{B}] \ge \frac{3}{4}$.
\end{lemma}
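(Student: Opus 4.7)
The plan is to exploit the fact that $R$ stores only $r$ bits, so the algorithm must retain most of the entropy of $M_{\ind}$ even after conditioning on $R$ and $M_{-\ind}$. First, I would control $H(M_{\ind} \mid R, M_{-\ind})$ directly. Since $R = R(M)$ is a deterministic function of $M$, expanding $H(M_{\ind}, R \mid M_{-\ind})$ in two ways via the chain rule and using $H(R \mid M) = 0$, $H(M_{\ind} \mid M_{-\ind}) = nB$ (the rows of $M$ are independent and uniform), and $H(R \mid M_{-\ind}) \le H(R) \le r$, yields
\begin{equation*}
H(M_{\ind} \mid R, M_{-\ind}) \ge nB - r \;.
\end{equation*}

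Second, I would lift this bound to the individual vectors $M_{\ind | u} = u_{I_{\ind}}^{\dagger} M_{\ind}$ via a basis argument. For any ordered basis $u^{(1)}, \ldots, u^{(B)}$ of $\mathbb{F}_2^B$, the map $M_{\ind} \mapsto ((u^{(j)})^{\dagger} M_{\ind})_{j=1}^B$ is left-multiplication by an invertible $B \times B$ matrix over $\mathbb{F}_2$, so this tuple determines $M_{\ind}$. Subadditivity, combined with the trivial bound $H(v^{\dagger} M_{\ind} \mid R, M_{-\ind}) \le n$, then yields, upon writing $\ell(v) := n - H(v^{\dagger} M_{\ind} \mid R, M_{-\ind})$ for each $v \in \mathbb{F}_2^B$,
\begin{equation*}
\sum_{j=1}^B \ell(u^{(j)}) \le r \quad \text{for every ordered basis of } \mathbb{F}_2^B \;.
\end{equation*}

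Third, I would convert this per-basis bound into a typical-$u$ bound by averaging over random bases. The group $GL_B(\mathbb{F}_2)$ acts transitively on nonzero vectors of $\mathbb{F}_2^B$ and on ordered bases, so a standard double-counting argument shows that every nonzero $v \in \mathbb{F}_2^B$ lies in exactly a $B/(2^B - 1)$ fraction of ordered bases. Summing the per-basis inequality over all bases therefore gives $\sum_{v \ne 0} \ell(v) \le r(2^B - 1)/B$, and Markov's inequality produces $\Pr_{v \in \mathbb{F}_2^B \setminus \{0\}}[\ell(v) > 8r/B] \le 1/8$. Absorbing the probability $\Pr_u[u_{I_{\ind}} = 0] = 2^{-B}$ into the remaining slack (for $B$ sufficiently large, to be fixed later) completes the proof of $\Pr_u[H(M_{\ind | u} \mid R, M_{-\ind}) \ge n - 8r/B] \ge 3/4$.

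The main obstacle is precisely this last step. A direct subadditivity bound over all $2^B$ vectors $\{v^{\dagger} M_{\ind}\}_{v \in \mathbb{F}_2^B}$ would be essentially vacuous, since their joint entropy is still only $nB$ and the implied average loss per vector is merely $r/2^B$. The key is to exploit the fact that any $B$ linearly independent vectors already determine the whole tuple, together with the transitive $GL_B(\mathbb{F}_2)$-symmetry, to upgrade the bound ``$r$ spread over $B$ terms'' into a bound ``roughly $r \cdot 2^B/B$ spread over $2^B - 1$ terms''. This gain of a factor $2^B/B$ in the average is what makes Markov's inequality effective.
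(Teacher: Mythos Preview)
Your argument is correct, but it follows a genuinely different route from the paper. The paper argues directly on the ``bad'' set $U = \{u : H(M_{\ind | u} \mid R, M_{-\ind}) < n - 8r/B\}$: it picks $u^{(1)},\ldots,u^{(\ell)} \in U$ whose projections to $I_{\ind}$ are linearly independent, uses that the random variables $M_{\ind|u^{(1)}},\ldots,M_{\ind|u^{(\ell)}}$ are then mutually independent (and independent of $M_{-\ind}$), and bounds $r \ge I(R; M_{\ind|u^{(1)}},\ldots,M_{\ind|u^{(\ell)}} \mid M_{-\ind}) \ge \ell \cdot \frac{8r}{B}$, forcing $\ell \le B/8$. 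Thus the projections of $U$ span a subspace of dimension at most $B/8$, so $|U| \le 2^{B/8}\cdot 2^{n-B}$, an exponentially small fraction of $\mathbb{F}_2^n$. Your approach instead bounds the \emph{average} loss $\ell(v)$ over nonzero $v$ by symmetrizing over $GL_B(\mathbb{F}_2)$ and then applies Markov. The paper's dimension argument is shorter and yields a much stronger (exponentially small) bound on the bad set, which is not needed here but could matter if one wanted to union-bound over many events later. Your averaging argument is perhaps more mechanical and reusable---it never explicitly invokes independence of the $M_{\ind|u^{(j)}}$, only subadditivity and the bijection given by a full basis---but it caps out at the constant $1/8$ and requires the separate handling of $u_{I_{\ind}}=0$ (which, as you note, is harmless once $B\ge 3$, and here $B=\lfloor 32r/n\rfloor \ge 32$).
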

\begin{proof}
Let $U = \{u \in \mathbb{F}_2^n : H(M_{\ind | u} \mid R, M_{- \ind}) < n - \frac{8r}{B} \}$, and let $u^{(1)},\ldots,u^{(\ell)} \in U$ be a sequence in $U$ such that $u^{(1)}_{I_{\ind}}, \ldots, u^{(\ell)}_{I_{\ind}}$ are linearly independent over $\mathbb{F}_2^{|I_{\ind}|}$. Then the random variables $M_{\ind | u^{(1)}}, \ldots, M_{\ind | u^{(\ell)}}, M_{- \ind}$ are independent.
To see this, first note that by the definition ,$M_{-\ind}$ is independent of $M_{\ind}$ and hence of $M_{\ind | u^{(1)}}, \ldots, M_{\ind | u^{(\ell)}}$.
Next observe that for any $k \in [\ell]$ and $b \in \mathbb{F}_2^{n}$, $Pr_M[M_{\ind|u^{(k)}}=b]=1/2^n$. Now since the vectors $u^{(1)},\ldots,u^{(\ell)}$ are linearly independent, for any $b^{(1)},\ldots,b^{(\ell)} \in \mathbb{F}_2^{n}$, $Pr_M[\text{for all }k,\;M_{\ind|u^{(k)}}=b^{(k)}]=(1/2^n)^{\ell}$. 
Therefore
\begin{equation*}
\begin{split}
r \ge H(R) &\ge I(R ; M_{\ind | u^{(1)}}, \ldots, M_{\ind | u^{(\ell)}} \mid M_{- \ind}) \\
& = H(M_{\ind | u^{(1)}}, \ldots, M_{\ind | u^{(\ell)}} \mid M_{- \ind}) - H(M_{\ind | u^{(1)}}, \ldots, M_{\ind | u^{(\ell)}} \mid R, M_{- \ind}) \\
& \ge \sum_{j=1}^\ell{\left( H(M_{\ind | u^{(j)}}\mid M_{- \ind}) - H(M_{\ind | u^{(j)}} \mid R, M_{- \ind}) \right)} \\
& \ge \sum_{j=1}^\ell{\left( n - \left( n - \frac{8r}{B} \right) \right)} = \frac{8r \ell}{B} \;,
\end{split}
\end{equation*}
thus $\ell \le B/8$ implying $| \{ u_{I_{\ind}} : u \in U \} | \le 2^{B/8}$ and we get that $|U| \le 2^{n-2}$.
\end{proof}

Now the following is a simple application of union bound.
\begin{corollary} \label{cor:uExsists}
There exists $u^* \in \mathbb{F}^n_2$ such that
$$H(M_{\ind | u} \mid R, M_{- \ind}) \ge n - 8r/B \quad and \quad \mathbb{E}_{M,v}[t_{\ind}(u,M,v)] \le 4tB/n \;.$$
\end{corollary}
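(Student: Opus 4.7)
The plan is to combine Lemma~\ref{lem:expected-probe} and Lemma~\ref{lem:conditional-info} via a direct union bound over a uniformly random $u \in \mathbb{F}_2^n$. Each lemma identifies a ``bad'' set of $u$'s of measure at most $1/4$: the first with respect to the probe-count condition $\mathbb{E}_{M,v}[t_{\ind}(u,M,v)] \le 4tB/n$, and the second with respect to the entropy condition $H(M_{\ind \mid u} \mid R, M_{-\ind}) \ge n - 8r/B$. Crucially, the two lemmas share the same index $\ind$, fixed once and for all by Lemma~\ref{lem:expected-probe}.

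First I would recall that $\ind$ is the index guaranteed by Lemma~\ref{lem:expected-probe}, so that
\[
\Pr_{u}\!\left[\mathbb{E}_{M,v}[t_{\ind}(u,M,v)] > \tfrac{4tB}{n}\right] \le \tfrac{1}{4}.
\]
Next, since the statement of Lemma~\ref{lem:conditional-info} holds for every $i \in [n/B]$ (in particular for this $\ind$), I would apply it with $i=\ind$ to obtain
\[
\Pr_{u}\!\left[H(M_{\ind \mid u} \mid R, M_{-\ind}) < n - \tfrac{8r}{B}\right] \le \tfrac{1}{4}.
\]

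Then by a union bound, the probability that a uniformly random $u \in \mathbb{F}_2^n$ violates either condition is at most $1/4 + 1/4 = 1/2 < 1$. Consequently, the set of $u$ satisfying both inequalities simultaneously is non-empty, and any element of it may be taken as the required $u^*$. No step here is a genuine obstacle: the entire argument is a one-line probabilistic existence claim, and the only thing to verify carefully is that the two ``good'' events are defined with respect to the same fixed $\ind$ so that the union bound is applicable.
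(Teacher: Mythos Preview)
Your proposal is correct and is exactly the approach taken in the paper, which states that the corollary is ``a simple application of union bound'' combining Lemma~\ref{lem:expected-probe} and Lemma~\ref{lem:conditional-info}. Your care in noting that Lemma~\ref{lem:conditional-info} applies to the specific $\ind$ fixed by Lemma~\ref{lem:expected-probe} matches the paper's own remark that the lemma holds for all $i\in[n/B]$.
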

Let us define ${\cal M} \eqdef \{M \in \mathbb{F}_2^{n \times n} : \mathbb{E}_v[t_{\ind}(u^*,M,v)] \le \frac{8tB}{n}\}$ and then Markov's inequality implies the following.
\begin{claim}
\label{clm:F2InitialInfo}
$\Pr_M[M \in {\cal M}] \ge \frac{1}{2}$.
\end{claim}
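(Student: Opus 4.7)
The claim is a direct application of Markov's inequality to the non-negative random variable $X(M) \eqdef \mathbb{E}_v[t_{\ind}(u^*, M, v)]$, viewed as a function of $M \in_R \mathbb{F}_2^{n \times n}$. The plan is to first observe that $X$ is non-negative (since $t_{\ind}$ counts cell probes), and then to invoke the bound on $\mathbb{E}_M[X(M)]$ already established in Corollary~\ref{cor:uExsists}.

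Concretely, by the tower property of expectation,
\[
\mathbb{E}_M[X(M)] \;=\; \mathbb{E}_M \mathbb{E}_v[t_{\ind}(u^*, M, v)] \;=\; \mathbb{E}_{M,v}[t_{\ind}(u^*, M, v)] \;\le\; \frac{4tB}{n},
\]
where the inequality is precisely the second conclusion of Corollary~\ref{cor:uExsists}. Applying Markov's inequality with threshold $8tB/n = 2 \cdot \mathbb{E}_M[X(M)]$ then yields
\[
\Pr_M\!\left[X(M) > \frac{8tB}{n}\right] \;\le\; \frac{\mathbb{E}_M[X(M)]}{8tB/n} \;\le\; \frac{4tB/n}{8tB/n} \;=\; \frac{1}{2}.
\]
Taking complements gives $\Pr_M[M \in \mathcal{M}] \ge 1/2$, as required.

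There is essentially no obstacle here: the entire content of the claim is the Markov bound, and the non-trivial work (the existence of a single $u^*$ simultaneously satisfying both the information-theoretic lower bound and the expected-probes upper bound) was already absorbed into Corollary~\ref{cor:uExsists} via the union bound over the two "$3/4$-probability" events from Lemmas~\ref{lem:expected-probe} and~\ref{lem:conditional-info}. The only thing worth being a bit careful about is that the probability in the definition of $\mathcal{M}$ is taken solely over $M$ (with $u^*$ fixed and with the inner expectation averaging out $v$), which is exactly what the tower property delivers.
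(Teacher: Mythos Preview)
Your proof is correct and matches the paper's approach exactly: the paper simply states that Markov's inequality implies the claim, and you have spelled out precisely that application using the bound $\mathbb{E}_{M,v}[t_{\ind}(u^*,M,v)] \le 4tB/n$ from Corollary~\ref{cor:uExsists}.
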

The next Lemma shows that whenever $M \in {\cal M}$, $M_{\ind \mid u^*}$ can be encoded using a few bits.
\begin{lemma}
\label{lem:encode}
If $M \in {\cal M}$ then $M_{\ind \mid u^*}$ can be encoded using $\frac{64tB}{n} \lg n$ extra bits (in addition to $R$ and $M_{- \ind}$).
\end{lemma}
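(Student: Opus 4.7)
My plan is, given $M \in \mathcal{M}$, to write down a short certificate from which, together with $R$ and $M_{-\ind}$, the decoder can reconstruct $M_{\ind \mid u^*}$.

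First, I would apply Markov to the defining expectation bound of $\mathcal{M}$: the set $V_M \eqdef \{v \in \mathbb{F}_2^n : t_{\ind}(u^*,M,v) \le 16tB/n\}$ has size $|V_M| > 2^{n-1}$. Since every proper affine subspace of $\mathbb{F}_2^n$ has at most $2^{n-1}$ elements, $V_M$ spans $\mathbb{F}_2^n$, and so we may select a basis $v^{(1)},\ldots,v^{(n)} \in V_M$ by a canonical rule (for example, lexicographic greedy extension) that the decoder can replay once it has seen the encoded probes.

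Using the algebraic identity $u^{*\dagger} M v = (M_{\ind\mid u^*}) \cdot v + u^{*\dagger}_{[n]\setminus I_{\ind}} M_{-\ind} v$, whose second summand depends only on $M_{-\ind}$ and is therefore known to the decoder, knowledge of the $n$ bits $u^{*\dagger} M v^{(1)},\ldots,u^{*\dagger} M v^{(n)}$ suffices to recover $M_{\ind \mid u^*}$ via a single linear solve over $\mathbb{F}_2$. The certificate will therefore be the transcript of the algorithm's probes into $M_{\ind}$ during each of the $n$ runs on $(u^*,v^{(j)})$: the location (in $\lg(Bn) = O(\lg n)$ bits) and one-bit value of each cell it reads in $M_{\ind}$, in the order probed. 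Given this transcript together with $R$ and $M_{-\ind}$, the decoder replays the algorithm on each $v^{(j)}$, answering probes into $M_{-\ind}$ from the side information and probes into $M_{\ind}$ from the transcript, thereby obtaining the $n$ required bits.

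The main technical obstacle is bounding the total length of this transcript by $\frac{64tB}{n}\lg n$ bits. Since each query contributes at most $16tB/n$ cells and there are $n$ queries, a naive accounting yields $O(tB\lg n)$ bits, off by a factor of $n$. The plan to close this gap is to charge each distinct cell only once: once a cell's value has appeared on the tape, subsequent queries that re-probe it inherit the value from the decoder's cache and contribute no fresh bits. By choosing the basis $v^{(j)}$ adaptively within $V_M$ so as to maximise reuse of already-disclosed cells, and by arguing that the total number of distinct cells across all $n$ queries stays within $O(tB/n)$, the transcript length drops to $O(tB/n) \cdot O(\lg n) = \frac{64tB}{n}\lg n$. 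With the encoding lemma in hand, combining it with Lemma~\ref{lem:conditional-info} via Shannon's source coding gives $\frac{64tB}{n}\lg n + 8r/B \ge n - O(1)$, and balancing by $B \asymp \sqrt{rn/(t\lg n)}$ yields the trade-off $tr \ge \Omega(n^3/\lg n)$ promised by Theorem~\ref{thm:mainUMVF2}, with the $r < n$ regime handled by a padding argument as in the proof of Theorem~\ref{thm:detMvBoolean}.
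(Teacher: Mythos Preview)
Your opening moves are fine: Markov gives $|V_M|>2^{n-1}$, so $V_M$ spans $\mathbb{F}_2^n$, and the algebraic identity reducing recovery of $M_{\ind\mid u^*}$ to knowing $(u^*)^\dagger M v$ on a basis is exactly right. The gap is in the step you yourself flag as ``the main technical obstacle''. You propose to pick a full basis $v^{(1)},\ldots,v^{(n)}\in V_M$ and argue, by adaptive selection and cell reuse, that the total number of \emph{distinct} cells of $M_{\ind}$ probed across all $n$ runs is $O(tB/n)$. No argument for this is given, and in general it is false: nothing prevents the $n$ probe sets $S(v^{(j)})$ (each of size $\le 16tB/n$) from being essentially disjoint, in which case the union has order $tB$ cells, not $tB/n$. ``Maximising reuse'' is not a proof; you would need a structural reason for massive overlap, and none is available from the hypotheses. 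There is also a circularity hazard in your decoder's ``canonical lex-greedy'' basis selection: membership in $V_M$ depends on how many $M_{\ind}$-cells the algorithm touches, which the decoder cannot test for a candidate $v$ without already knowing those cells' values.

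The paper closes the gap with a different idea: instead of seeking a full basis with small total probe footprint, apply pigeonhole over the $\binom{nB}{16tB/n}$ possible probe sets inside $M_{\ind}$. Some \emph{single} set $\mathcal{S}$ of $16tB/n$ cells is the probe set for at least $2^{n-1}/\binom{nB}{16tB/n}\ge 2^{\,n-\frac{32tB}{n}\lg n}$ vectors $v$. Those vectors span a subspace of dimension $\ge n-\frac{32tB}{n}\lg n$, so given $R,M_{-\ind}$ and the locations/values of $\mathcal{S}$ (costing $\le \frac{32tB}{n}\lg n$ bits) the decoder can compute $(u^*)^\dagger Mv$, hence $M_{\ind\mid u^*}\cdot v$, on that whole subspace; the remaining $\le \frac{32tB}{n}\lg n$ ambient dimensions are then covered by writing down one extra bit each. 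The total is $\frac{64tB}{n}\lg n$. In short: you should not try to reach a full basis inside $V_M$ with tiny probe union; instead, use pigeonhole to get a \emph{large} subspace covered by one small probe set, and pay raw bits only for the low-dimensional complement.
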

\begin{proof}
Fix some $M \in {\cal M}$. Then by Markov's inequality, $\Pr_v [t_{\ind}(u^*,M,v) \le \frac{16tB}{n}] \ge \frac{1}{2}$.
Therefore there exists a set ${\cal S}$ of $\frac{16tB}{n}$ entries in $M_{\ind}$ (note that the submatrix $M_{\ind}$ is of size $nB$) such that by probing only entries from $M_{- \ind}, R$ and ${\cal S}$ the algorithm can answer at least 
$$\frac{2^{n-1}}{\binom{nB}{16tB/n}} \ge \frac{2^{n-1}}{\left(\frac{enB}{16tB/n}\right)^{\frac{16tB}{n}}} = 2^{n-1 - \frac{16tB}{n} \lg \frac{en^2}{16t}} \ge 2^{n - \frac{32tB}{n} \lg n}$$
queries of the form $(u^*)^{\dagger}Mv$. The set ${\cal S}$ can be encoded using $\frac{16tB}{n} \lg (n^2)$ bits.
Let $V \subseteq \mathbb{F}_2^n$ denote the set of vectors such that for any $v \in V$ the query $(u^*)^{\dagger}Mv$ can be answered by probing entries only from $M_{- \ind}, R, {\cal S}$. Observe that $V$ is a linear subspace of $\mathbb{F}_2^n$, and $\dim(V) \ge n - \frac{32tB}{n} \lg n$.

Next, fix an ordering $v^{(1)},\ldots,v^{(2^n)}$ of $\mathbb{F}_2^n$, and consider the string ${\cal E}$ of bits constructed as follows. Starting with an empty string ${\cal E}$, for every $k \in [2^n]$, if $v^{(k)} \notin  {\sf span}\left( V \cup \{v^{(1)},\ldots,v^{(k-1)}\} \right)$, append $(u^*)^{\dagger}Mv^{(k)}$ to ${\cal E}$. Since $\dim(V) \ge n - \frac{32tB}{n} \lg n$, it follows that ${\cal E}$ can be encoded using at most $\frac{32tB}{n} \lg n$ bits.

Now by probing only $M_{- \ind}, R, {\cal S}, {\cal E}$ we can answer $(u^*)^{\dagger}Mv$ for all $v \in \mathbb{F}_2^n$, which in terms suffices to retrieve the string $M_{\ind \mid u^*}$. 
\end{proof}
 
 Now we are ready to prove the main result of this section. 
 \begin{proof}[Proof of Theorem~\ref{thm:mainUMVF2}]
We prove the theorem by showing that $n - \frac{8r}{B} \le \frac{64tB}{n}\lg n + \frac{n}{2} + 1$. Setting $B = \lfloor \frac{32r}{n}\rfloor$ then implies the theorem.

To this end, let $\mathbbm{1}_{M \in {\cal M}}$ denote the indicator random variable for the event $M \in {\cal M}$. 
By definition of $u^*$ we have 
\begin{equation}
n - \frac{8r}{B} \le H(M_{\ind | u^*} \mid R, M_{- \ind}) \le H(M_{\ind | u^*}, \mathbbm{1}_{M \in {\cal M}} \mid R, M_{- \ind}) \;.
\label{eq:addInd}
\end{equation}
Applying the chain rule of entropy we get that 
\begin{equation}
\begin{split}
H(\mathbbm{1}_{M \in {\cal M}}, M_{\ind | u^*} \mid R, M_{- \ind}) &= H(\mathbbm{1}_{M \in {\cal M}} \mid R, M_{- \ind}) + H(M_{\ind | u^*} \mid R, M_{- \ind}, \mathbbm{1}_{M \in {\cal M}}) \\
&\le H(\mathbbm{1}_{M \in {\cal M}}) + H(M_{\ind | u^*} \mid R, M_{- \ind}, \mathbbm{1}_{M \in {\cal M}}) \;.
\label{eq:chainRule}
\end{split}
\end{equation}
Clearly, $H(\mathbbm{1}_{M \in {\cal M}}) \le 1$. Next we bound $H(M_{\ind | u^*} \mid R, M_{- \ind}, \mathbbm{1}_{M \in {\cal M}})$ as follows.
\begin{equation}
\begin{split}
&H(M_{\ind | u^*} \mid R, M_{- \ind}, \mathbbm{1}_{M \in {\cal M}}) = \\
&=H(M_{\ind | u^*} \mid R, M_{- \ind}, \mathbbm{1}_{M \in {\cal M}}=1) \cdot \Pr[\mathbbm{1}_{M \in {\cal M}}=1] + H(M_{\ind | u^*} \mid R, M_{- \ind}, \mathbbm{1}_{M \in {\cal M}}=0) \cdot \Pr[\mathbbm{1}_{M \in {\cal M}}=0].
\end{split}
\label{eq:totalEntropy}
\end{equation}
Conditioned on $\mathbbm{1}_{M \in {\cal M}}=1$, Lemma~\ref{lem:encode} guarantees that we can encode $M_{\ind | u^*}$ using at most $\frac{64tB}{n}\lg n$ bits in addition to $R, M_{- \ind}$. Therefore by the Shannon's source coding theorem 
$$H(M_{\ind | u^*} \mid R, M_{- \ind}, \mathbbm{1}_{M \in {\cal M}}=1) \cdot \Pr[\mathbbm{1}_{M \in {\cal M}}=1] \le \frac{64tB}{n}\lg n \cdot 1 \;.$$
Claim~\ref{clm:F2InitialInfo} implies that 
$$H(M_{\ind | u^*} \mid R, M_{- \ind}, \mathbbm{1}_{M \in {\cal M}}=0) \cdot \Pr[\mathbbm{1}_{M \in {\cal M}}=0] \le n \cdot \frac{1}{2} \;.$$
Plugging the last two inequalities into \eqref{eq:totalEntropy} we get that 
$H(M_{\ind | u^*} \mid R, M_{- \ind}, \mathbbm{1}_{M \in {\cal M}}) \le \frac{64tB}{n}\lg n + \frac{n}{2} \;.$
Plugging this into \eqref{eq:addInd}, \eqref{eq:chainRule} we get that $n - \frac{8r}{B} \le \frac{64tB}{n}\lg n + \frac{n}{2} + 1$.
Now for $r \ge n$ by substituting $B=\lfloor \frac{32r}{n}\rfloor$, we conclude that $\frac{2048tr}{n^2}\lg n \ge \frac{n}{4}$, and thus $tr \ge \Omega (n^3/\lg n)$.

For $r <n$ we use the following simple padding argument. Append $R$ with some arbitrary bits so that the size (no. of bits) of the new data structure $R'$ becomes $n$. Now from the previous argument it follows that $t \ge \Omega (n^2/\lg n)$, thus completing the proof.
\end{proof}

\paragraph*{Comment on query algorithms with error. }
In Theorem~\ref{thm:mainUMVF2} we consider query algorithms those always output the correct answer and provide lower bound. It is worth noting that our proof technique can be generalized to give lower bound for the case when the query algorithm may err with probability at most $1/64$ (though any small constant probability will work) on average over the choices of $M,u,v$. We need to modify the proof a bit by considering the event that the algorithm (say ${\cal A}$) errs, i.e., ${\cal A}(u,M,v) \ne u^{\dagger}Mv$. From $Pr_{u,M,v}[{\cal A}(u,M,v) \ne u^{\dagger}Mv]\le 1/64$, using Markov's inequality we can deduce that $Pr_u[Pr_{M,v}[{\cal A}(u,M,v) \ne u^{\dagger}Mv] \ge 1/16] \le 1/4$. Now we choose $u^*$ that satisfies Corollary~\ref{cor:uExsists} and $Pr_{M,v}[{\cal A}(u^*,M,v) \ne (u^*)^{\dagger}Mv] \le 1/16$. The existence of such a $u^*$ follows from simple union bound. Similarly $Pr_M[Pr_{v}[{\cal A}(u^*,M,v) \ne (u^*)^{\dagger}Mv] \ge 1/4] \le 1/4$. Now define ${\cal M} \eqdef \{M \in \mathbb{F}_2^{n \times n} : \mathbb{E}_v[t_{\ind}(u^*,M,v)] \le \frac{16tB}{n}\} $ and hence $Pr_M[M \in {\cal M} \text{ and }Pr_{v}[{\cal A}(u^*,M,v) \ne (u^*)^{\dagger}Mv] \le 1/4] \ge 1/2$. Next we modify Lemma~\ref{lem:encode} by saying that for any $M \in {\cal M}$,
$$\Pr_v [{\cal A}(u^*,M,v) = (u^*)^{\dagger}Mv\text{ and }t_{\ind}(u^*,M,v) \le \frac{64tB}{n}] \ge \frac{1}{2}.$$ The remaining argument will be the same and we will get similar lower bound. One can further extend this lower bound result to randomized query algorithms that given $u,v$ output correct answer with high probability, by using the technique described in Section~\ref{sec:randomizedQueryLB}.

\bibliographystyle{alphaurlinit}
\bibliography{LB}

\begin{thebibliography}{HKNS15}

\bibitem[BBK17]{brodynew}
J.~Boninger, J.~Brody, and O.~Kephart.
\newblock Non-adaptive data structure bounds for dynamic predecessor search.
\newblock {\em Electronic Colloquium on Computational Complexity {(ECCC)}},
  24:50, 2017.

\bibitem[BL13]{BL13}
K.~Bringmann and K.~G. Larsen.
\newblock Succinct sampling from discrete distributions.
\newblock In {\em Proceedings of the Forty-fifth Annual ACM Symposium on Theory
  of Computing}, STOC '13. ACM, 2013.

\bibitem[BL15]{brody}
J.~Brody and K.~G. Larsen.
\newblock Adapt or die: Polynomial lower bounds for non-adaptive dynamic data
  structures.
\newblock {\em Theory of Computing}, 11:471--489, 2015.

\bibitem[CGL15]{CGL15}
R.~Clifford, A.~Gr{\o}nlund, and K.~G. Larsen.
\newblock New unconditional hardness results for dynamic and online problems.
\newblock In {\em 56th Annual Symposium on Foundations of Computer Science,
  2015}, pages 1089--1107, 2015.

\bibitem[CT06]{CT06}
T.~M. Cover and J.~A. Thomas.
\newblock {\em Elements of information theory {(2.} ed.)}.
\newblock Wiley, 2006.

\bibitem[FS89]{FS89}
M.~L. Fredman and M.~E. Saks.
\newblock The cell probe complexity of dynamic data structures.
\newblock In {\em Proceedings of the 21st Annual {ACM} Symposium on Theory of
  Computing}, pages 345--354, 1989.

\bibitem[GM07]{gal:succinct}
A.~G\'{a}l and P.~B. Miltersen.
\newblock The cell probe complexity of succinct data structures.
\newblock {\em Theoretical Computer Science}, 379:405--417, July 2007.

\bibitem[GO95]{AnkaOvermars}
A.~Gajentaan and M.~H. Overmars.
\newblock On a class of {$O(N^2)$} problems in computational geometry.
\newblock {\em Comput. Geom. Theory Appl.}, 5(3):165--185, October 1995.

\bibitem[HKNS15]{HKNS15}
M.~Henzinger, S.~Krinninger, D.~Nanongkai, and T.~Saranurak.
\newblock Unifying and strengthening hardness for dynamic problems via the
  online matrix-vector multiplication conjecture.
\newblock In {\em Proceedings of the Forty-Seventh Annual {ACM} on Symposium on
  Theory of Computing, 2015}, pages 21--30, 2015.

\bibitem[IP01]{ImpagliazzoPaturi}
R.~Impagliazzo and R.~Paturi.
\newblock On the complexity of k-sat.
\newblock {\em J. Computer and System Sciences}, 62(2):367--375, March 2001.

\bibitem[Jac88]{Jacobson:1988:SSD:915547}
G.~J. Jacobson.
\newblock {\em Succinct Static Data Structures}.
\newblock PhD thesis, Carnegie Mellon University, Pittsburgh, PA, USA, 1988.

\bibitem[Lar12a]{Larsen12a}
K.~G. Larsen.
\newblock The cell probe complexity of dynamic range counting.
\newblock In {\em Proceedings of the 44th Symposium on Theory of Computing
  Conference, {STOC} 2012}, pages 85--94, 2012.

\bibitem[Lar12b]{Larsen:2012:focs}
K.~G. Larsen.
\newblock Higher cell probe lower bounds for evaluating polynomials.
\newblock In {\em 53rd Annual IEEE Symposium on Foundations of Computer
  Science}, pages 293--301, 2012.

\bibitem[LW17]{LW17}
K.~G. Larsen and R.~R. Williams.
\newblock Faster online matrix-vector multiplication.
\newblock In {\em Proceedings of the Twenty-Eighth Annual {ACM-SIAM} Symposium
  on Discrete Algorithms, 2017}, pages 2182--2189, 2017.

\bibitem[LWY17]{LWY17}
K.~G. Larsen, O.~Weinstein, and H.~Yu.
\newblock Crossing the logarithmic barrier for dynamic boolean data structure
  lower bounds.
\newblock {\em CoRR}, abs/1703.03575, 2017.
\newblock Available from: \url{http://arxiv.org/abs/1703.03575}.

\bibitem[P{\v a}t08]{patrascu08succinct}
M.~P{\v a}tra{\c s}cu.
\newblock Succincter.
\newblock In {\em Proc. 49th IEEE Symposium on Foundations of Computer Science
  (FOCS)}, pages 305--313, 2008.

\bibitem[P{\v a}t11]{Pat11}
M.~P{\v a}tra\c{s}cu.
\newblock Unifying the landscape of cell-probe lower bounds.
\newblock {\em {SIAM} J. Comput.}, 40(3):827--847, 2011.

\bibitem[PD06]{PD06}
M.~P\v{a}tra\c{s}cu and E.~D. Demaine.
\newblock Logarithmic lower bounds in the cell-probe model.
\newblock {\em {SIAM} J. Comput.}, 35(4):932--963, 2006.

\bibitem[PT11]{PT11}
M.~P\v{a}tra\c{s}cu and M.~Thorup.
\newblock Don't rush into a union: take time to find your roots.
\newblock In {\em Proceedings of the 43rd {ACM} Symposium on Theory of
  Computing, {STOC} 2011}, pages 559--568, 2011.

\bibitem[PV10]{patrascu10ranklb}
M.~P{\v a}tra{\c s}cu and E.~Viola.
\newblock Cell-probe lower bounds for succinct partial sums.
\newblock In {\em Proc. 21st ACM/SIAM Symposium on Discrete Algorithms (SODA)},
  pages 117--122, 2010.

\bibitem[RR17]{Ramamoorthy2017NonAdaptiveDS}
S.~N. Ramamoorthy and A.~Rao.
\newblock Non-adaptive data structure lower bounds for median and predecessor
  search from sunflowers.
\newblock {\em Electronic Colloquium on Computational Complexity (ECCC)},
  24:40, 2017.

\bibitem[Sha48]{Sha48}
C.~E. Shannon.
\newblock {A mathematical theory of communication}.
\newblock {\em Bell system technical journal}, 27, 1948.

\bibitem[WW10]{WAPSP}
V.~V. Williams and R.~Williams.
\newblock Subcubic equivalences between path, matrix and triangle problems.
\newblock In {\em 51st Annual IEEE Symposium on Foundations of Computer
  Science}, pages 645--654, 2010.

\bibitem[WY16]{weinstein:lbs}
O.~Weinstein and H.~Yu.
\newblock Amortized dynamic cell-probe lower bounds from four-party
  communication.
\newblock In {\em 57th Annual IEEE Symposium on Foundations of Computer
  Science}, pages 305--314, 2016.

\bibitem[Yao81]{Yao81a}
A.~C. Yao.
\newblock Should tables be sorted?
\newblock {\em J. {ACM}}, 28(3):615--628, 1981.

\end{thebibliography}
\end{document}